\documentclass[a4paper,11pt]{article}
\usepackage[a4paper]{geometry}
\usepackage{fullpage}
\date{}

\usepackage[
    style=trad-abbrv,
    maxcitenames=2,
    doi=false,
    url=false,
    isbn=false,
    backend=biber]{biblatex}
\usepackage{graphicx}
\usepackage{subcaption}
\usepackage{framed}
\usepackage{amsmath}
\usepackage{amssymb}
\usepackage{amsfonts}
\usepackage{mathrsfs}
\usepackage{array}
\usepackage{enumerate}
\usepackage{amsthm}  
\usepackage{tikz}
\usepackage[linktocpage=true]{hyperref}
\usepackage{listings}
\usepackage{xcolor}
\usepackage{dsfont}
\usepackage{wrapfig}
\usepackage{float}
\usepackage{verbatim}
\usepackage{bm}
\usepackage{placeins}
\theoremstyle{plain}
\newtheorem{thm}{Theorem}[section] 
\newtheorem{lem}[thm]{Lemma} 
\newtheorem{prop}[thm]{Proposition}

\theoremstyle{definition}

\theoremstyle{remark}
\newtheorem{rem}[thm]{Remark} 
\newcommand{\R}{\mathbb{R}}

\newcommand{\Z}{\mathbb{Z}}

\newcommand{\dd}{\,\mathrm{d}}

\newcommand{\bx}{\mathbf{x}}
\newcommand{\bk}{\mathbf{k}}

\newcommand{\bs}{\mathbf{s}}
\newcommand{\cKr}{\mathcal{K}_R}
\newcommand{\cI}{\mathcal{I}}
\newcommand{\by}{\mathbf{y}}

\title{Critical coupling thresholds for tilted Kuramoto-Vicsek models with a confining potential}

\author{Benedetta Bertoli$^1$, Benjamin D. Goddard$^2$, Grigorios A. Pavliotis$^1$}

\begin{document}

\maketitle

{\small \noindent $^1$Department of Mathematics, Imperial College London, London SW7 2AZ, UK.
\\
benedetta.bertoli21@imperial.ac.uk and pavl@ic.ac.uk
}

{\small \noindent $^2$School of Mathematics and Maxwell Institute for Mathematical Sciences, University of Edinburgh, Edinburgh EH9 3FD, UK. b.goddard@ed.ac.uk}

\begin{abstract}
We study a Kuramoto-Vicsek model of self-propelled particles with periodic boundary conditions subject to a constant angular tilt and a confining potential, and its mean-field (Fokker-Planck) behaviour. In the absence of confinement, the uniform density is stationary and we compute the critical coupling for four normalisation variants of the interaction kernel, showing that the leading instability is always spatially homogeneous. When the confining field is present, the uniform state is no longer stationary. We construct the new steady state perturbatively and apply eigenvalue perturbation theory to derive an explicit formula for the critical coupling as a function of the field strength. The threshold increases quadratically with confinement strength, and the tilt enters through the steady-state correction despite having no effect on the threshold in the absence of confinement. We verify the prediction numerically and derive self-consistency equations for stationary states with general multichromatic potentials.
\end{abstract}
 
\section{Introduction}\label{sec:introduction} 
Collective alignment is one of the basic mechanisms underlying the emergence of order in active matter. Examples range from bacterial suspensions and bird flocks to synthetic self-propelled particles, where local interactions between agents lead, at sufficiently low noise or sufficiently strong coupling strength, to the formation of orientationally ordered states~\cite{vicsek2012collective, marchetti2013hydrodynamics, ramaswamy2010mechanics, peruani2012collective, peruani2008mean, bechinger2016active}.
A basic model in this class is the Vicsek model~\cite{vicsek1995novel}. In this model, particles move with constant speed and update their orientation at discrete time steps by aligning with the average direction of neighbouring particles, subject to noise. The connection between the Vicsek alignment dynamics and the Kuramoto model of phase synchronisation has been studied from several perspectives~\cite{chepizhko2010relation, degond2014hydrodynamics}. In the mean-field limit, one obtains a nonlinear, nonlocal Fokker-Planck equation for the one-particle density. In the absence of external forcing, the spatially homogeneous and rotationally uniform density is a stationary solution. The linear stability of this disordered state has been analysed in the kinetic Vicsek framework in~\cite{peruani2008mean, degond2008continuum, degond2015phase}, among others, where in particular the critical coupling and the nature of the leading (spatially homogeneous) instability are established. For the original Vicsek-style particle system, numerical work showed that the onset of collective motion is discontinuous and is accompanied by travelling bands and phase coexistence~\cite{gregoire2004onset,chate2008collective}. When the spatial variable is absent, the model reduces to the noisy Kuramoto equation, whose stability and bifurcation theory fall within the broader framework of McKean–Vlasov equations on the torus. Phase transitions, long-time behaviour, and bifurcation structure for such equations have been studied in detail in~\cite{carrillo2020long}, with the noisy Kuramoto model as a principal example, and further mathematical results can be found in~\cite{dawson1983critical, delgadino2020diffusive}. 

The purpose of this paper is to study how this picture is modified by the presence of two additional angular forces. The first is a constant angular drift, or tilt, of strength $F>0$, modelling for example an imposed torque, a background rotational bias, or intrinsic circling motion. The second is an external orienting field of strength $h\geq 0$, favouring a preferred direction. Such a term may be viewed as an effective description for external forcing or alignment induced by confinement and boundaries.

The effect of these two perturbations is rather different. The tilt $F$ by itself does not change the classical synchronization threshold, since in the absence of confinement it can be removed by passing to a co-rotating frame. The field $h$, on the other hand, breaks rotational invariance. As a result, the uniform density $\rho_0 = \frac{1}{2\pi}$ is no longer stationary when $h\neq 0$, and the stability problem must be formulated around a non-uniform stationary state depending on $h$. This already changes the structure of the problem at a basic level. Moreover, when both $F$ and $h$ are present, their interaction is nontrivial: although the tilt is inessential at $h=0$, it enters the perturbative correction to the stability threshold once the confining field is switched on.

The effect of the external tilt is that detailed balance is no longer satisfied and that non-equilibrium steady states are created. This leads to new physical phenomena and renders the analysis of the problem more complicated. Even in the absence of interactions, i.e. for a single particle moving in a tilted periodic potential, interesting physical and mathematical phenomena can arise, that go far beyond the equilibrium setting. The calculation of the effective velocity of such a nonequilibrium particle, as a function of the tilt and of the temperature, goes back to Stratonovich~\cite{Straton63}, see also~\cite[Sec. 6.6]{pavliotis2014book}. In fact, our derivation of the self-consistency equation in Section~\ref{subsec:self-consistency} is motivated by Stratonovich's calculation. In addition, giant fluctuations can emerge. More precisely, the effective diffusion coefficient, relative to molecular diffusion, becomes infinite in the limit as the tilt converges to the critical forcing~\cite{reimann_al02}, see also~\cite[Sec. 6.6.]{pavliotis2014book}. This surprising phenomenon has been experimentally verified~\cite{Reimann_al2008}. A complete theoretical analysis of this giant acceleration of diffusion and of the corresponding pinning/depinning phase transitions can be found in~\cite{cheng2015long}. One of the motivations for the present work is to understand the combined effect of a nonequilibrium forcing and of collective behaviour, leading to non-equilibrium phase transitions. 

External orienting fields have also been considered in the Kuramoto literature, where terms of the form $-h\sin(\theta)$ arise naturally in models of driven or forced oscillators. A basic difficulty, already present there, is that the relevant reference (stationary) state is no longer uniform. Most of the available analytical results concern the case without tilt. The novelty in the present work is the simultaneous inclusion of tilt and confinement in a Kuramoto-Vicsek setting.

Kuramoto-style models with tilt have been studied before, for example in~\cite{buendia2021hybrid, cheng2015long, ohta2008critical}; the foundational active rotator model of~\cite{shinomoto1986phase} introduced the constant drive as a mechanism for bistability, and more recently~\cite{odor2023synchronization} has studied the resulting synchronization transitions on brain networks.

In this paper we consider a Kuramoto-Vicsek model with tilt and confinement and study the corresponding stability problem in the mean-field Fokker-Planck formulation. Our main interest is in the onset of orientational order and in the way in which it is affected by the external field. The analysis combines the self-consistency formulation of stationary states, a Fourier representation of the linearised operator, and perturbation theory for isolated eigenvalues.
We now summarise the main results and present the organisation of the paper.

\begin{enumerate}[(i)]
\item In Section~\ref{sec:model} we derive the self-consistency equations for stationary states for a general multichromatic interaction potential. We treat both the full spatially dependent problem in the case $v_0=0$ and the spatially homogeneous reduction when $v_0\neq 0$.

\item In Section~\ref{sec:h0} we revisit the case $h=0$ and carry out the linear stability analysis of the disordered state for four natural normalisations of the spatial interaction kernel, keeping the full spatial dependence. In each case we compute the critical coupling explicitly and show that the dominant unstable mode is always the spatially homogeneous Fourier mode $\bk=0$. In particular, this gives a PDE-level justification for the spatially homogeneous reduction used in earlier work. We also recover the fact that the tilt $F$ does not affect the threshold when $h=0$.

\item Section~\ref{sec:h-neq-0} contains the main perturbative result of the paper. For small $h$ we construct the branch of stationary solutions in the form  $\rho_h = \frac{1}{2\pi} + h \rho_1 + O(h^2)$, linearise the dynamics around $\rho_h$, and use eigenvalue perturbation theory to track the critical eigenvalue. The first-order correction vanishes, and the second-order correction can be computed explicitly. This leads to the expansion $\gamma_c(h) = 2\Gamma + \frac{h^2 \Gamma (3F^2 + 8 \Gamma^2)}{F^2(16\Gamma^2 + F^2)} + O(h^4)$, showing that the external field raises the critical coupling quadratically in $h$. The coefficient depends explicitly on both the tilt $F$ and the noise strength $\Gamma$. We compare this formula with direct numerical computations based on a Fourier-Galerkin discretisation of the linearised operator.

\item In Section~\ref{sec:spatial} we return to the full spatially dependent problem for $h\neq 0$. After decomposition in spatial Fourier modes, the self-propulsion term couples neighbouring angular modes and prevents a direct finite-dimensional Kato reduction for $\bk\neq 0$. We explain this obstruction and give a heuristic argument indicating that the mode $\bk=0$ should remain the most unstable one. Under this assumption, the threshold obtained in the homogeneous setting continues to determine the onset of instability in the full problem.
\end{enumerate}

\section{Model and stationary-state formulation}\label{sec:model}
\subsection{Particle system and mean-field limit}\label{subsec:particle}
We consider a system of $N$ self-propelled particles in $[0,1]^2$. The state of particle $i$ at time $t$ is described by its position $\bx_i(t)\in [0,1]^2$ and its orientation $\theta_i(t)\in[0,2\pi)$, evolving according to
\begin{align}
    \dd \bx_i &= v_0 (\cos(\theta_i),\sin(\theta_i)) \dd t, \label{eq:sde-x}\\
    \dd \theta_i &= \left[F-h\sin(\theta_i)-\gamma\sum_{j\in\Omega_i}K(\bx_i,\bx_j)\sin(\theta_i-\theta_j)\right]\dd t+\sqrt{2\Gamma} \dd B_i(t), \label{eq:sde-theta}
\end{align}
for $1\leq i\leq N$. Here $v_0>0$ is the self-propulsion speed, $F>0$ is the constant angular tilt, $h\geq 0$ is the strength of the confining field, $\gamma>0$ is the alignment strength, and $\Gamma>0$ is the angular diffusion coefficient. The processes $(B_i)_{i=1}^N$ are independent standard Brownian motions. The set $\Omega_i$ denotes the neighbours of particle $i$ within interaction radius $R>0$, and the kernel $K$ specifies the spatial interaction weight, while the precise normalisation of the interaction is introduced at the mean-field level.

Under standard mean-field assumptions, the empirical measure $\mu_t^N = \frac{1}{N}\sum_{i=1}^N \delta_{(\bx_i(t),\theta_i(t))}$ converges to a density $\rho(\bx,\theta,t)$ satisfying the nonlinear, nonlocal Fokker-Planck equation
\begin{equation}\label{eq:FP}
    \partial_t \rho = -v_0(\cos(\theta), \sin(\theta))\cdot\nabla_{\bx} \rho
    -\partial_\theta \Big( \big[F - h\sin(\theta) - \gamma I[\rho]\big]\,\rho\Big)
    + \Gamma\,\partial_\theta^2 \rho.
\end{equation}
The nonlocal alignment interaction is written in the form
\begin{equation}
    I[\rho](\bx,\theta,t)
    =
    \frac{1}{\mathcal N[\rho](\bx,\theta,t)}
    \int_{\R^2}\!\!\int_0^{2\pi}
    \cKr(\bx-\bx')\,\sin(\theta-\theta')\,\rho(\bx',\theta',t)
    \,\dd\theta'\dd\bx'
\end{equation}
where $\cKr$ is a spatial interaction kernel with range $R$, and $\mathcal N[\rho]$ denotes a normalisation factor. Different choices of $\mathcal N[\rho]$ correspond to the variants of the alignment rule considered below.
The terms on the right-hand side of \eqref{eq:FP} correspond respectively to transport in physical space with speed $v_0$ in the direction $(\cos(\theta),\sin(\theta))$, angular drift induced by the tilt, the confining field, and the alignment interaction, and angular diffusion with diffusivity $\Gamma$.  In the different types of normalisation that we will consider, an interaction length scale $R$ is introduced. The limit as the interaction length scale $R$ goes to zero is a very interesting one that can lead to discontinuous phase transitions and dynamical metastability~\cite{leimkuhler2025clusterformationdiffusivesystems, gerber2025formationclusterscoarseningweakly}. In this paper, we will fix the interaction length scale $R > 0$.
We refer to, e.g.~\cite{MartzelAslangul2001, peruani2008mean} for a formal derivation of mean field PDEs and to~\cite{chaintron2022propagationI, chaintron2022propagationII} and the references therein for rigorous results on propagation of chaos.

\subsection{Additive versus Non-Additive Interactions}\label{subsec:normalisations}
The interaction term may be normalised in different ways, depending on the modelling assumptions. A comprehensive analysis of how the chosen normalisation affects the critical threshold under a spatial homogeneity assumption on $\rho$ can be found in~\cite{chepizhko2021revisiting}. In this paper, the authors study two different types of normalisation, and they refer to them as additive versus non-additive interaction; see also the analysis presented in~\cite{Barre_2020}.  Since the precise choice affects the form of the linear stability threshold, we record here the four variants that will be considered later in Section~\ref{sec:h0}.

\begin{enumerate}[(i)]
    \item \textbf{Fully normalised} (normalisation in both $\bx$ and $\theta$):
    \begin{equation}\label{eq:I-full-norm}
        I[\rho](\bx,\theta) = \frac{\displaystyle\int_{|\bx-\bx'| \leq R} \int_0^{2\pi} \sin(\theta-\theta')\,\rho(\bx',\theta')\dd\theta'\dd\bx'}{\displaystyle\int_{|\bx-\bx'| \leq R} \int_0^{2\pi} \rho(\bx',\theta')\dd\theta'\dd\bx'}.
    \end{equation}

    \item \textbf{Unnormalised}:
    \begin{equation}\label{eq:I-unnorm}
        I[\rho](\bx,\theta) = \int_{|\bx-\bx'| \leq R} \int_0^{2\pi} \sin(\theta-\theta')\,\rho(\bx',\theta')\dd\theta'\dd\bx'.
    \end{equation}

    \item \textbf{Partial normalisation in $\theta$}:
    \begin{equation}\label{eq:I-theta-norm}
        I[\rho](\bx,\theta) = \frac{\displaystyle\int_{|\bx-\bx'| \leq R} \int_0^{2\pi} \sin(\theta-\theta')\,\rho(\bx',\theta')\dd\theta'\dd\bx'}{\displaystyle\int_0^{2\pi} \rho(\bx,\theta')\dd\theta'}.
    \end{equation}

    \item \textbf{Partial normalisation in $\bx$}:
    \begin{equation}\label{eq:I-x-norm}
        I[\rho](\bx,\theta) = \frac{\displaystyle\int_{|\bx-\bx'| \leq R} \int_0^{2\pi} \sin(\theta-\theta')\,\rho(\bx',\theta')\dd\theta'\dd\bx'}{\displaystyle\int_{|\bx-\bx'| \leq R} \rho(\bx',\theta)\dd\bx'}.
    \end{equation}
\end{enumerate}
The different normalisations in \eqref{eq:I-full-norm}–\eqref{eq:I-x-norm} reflect different modelling assumptions on how the local orientational information is converted into angular drift. In all cases the numerator represents the total alignment signal generated by particles within the interaction region. The denominator determines whether this signal is interpreted as a total torque or as an averaged one. The unnormalised form \eqref{eq:I-unnorm} corresponds to pairwise additive interactions, where each neighbour contributes independently to the turning rate so that the effective alignment strength increases with local density. By contrast, the fully normalised form \eqref{eq:I-full-norm} corresponds to a non-additive response, in which particles align with the local mean orientation and the turning rate remains bounded as the number of neighbours increases; see the discussion in \cite{chepizhko2021revisiting}. The partially normalised forms \eqref{eq:I-theta-norm} and \eqref{eq:I-x-norm} interpolate between these two extremes by normalising only with respect to one variable.

\subsection{Self-consistency formulation for stationary states}\label{subsec:self-consistency}

We now derive self-consistency equations for the stationary solutions of \eqref{eq:FP}. To accommodate a broader class of models, we replace the monochromatic interaction $\sin(\theta - \theta')$ by the derivative of a multichromatic potential $D(\theta) = -\sum_{k=1}^n a_k \cos(k\theta)$, $a_k > 0$, so that the interaction operator takes the form
\begin{equation}\label{eq:multichromatic-I}
I[\rho](\bx,\theta)
=
\frac{1}{\mathcal{N}[\rho](\bx,\theta)}
\sum_{k=1}^n k a_k
\int_{\R^2}\int_0^{2\pi}
\cKr (\bx,\bx')\sin\big(k(\theta-\theta')\big)\,
\rho(\bx',\theta')\,\dd\theta'\dd\bx'.
\end{equation}
The standard monochromatic case treated in the remainder of the paper corresponds to $n=1$, $a_1 = 1$. Self-consistency equations for stationary states are a classical tool in the analysis of mean-field phase oscillator models, developed systematically in the noisy setting via the nonlinear Fokker-Planck equation, for example, in~\cite{bertini2010dynamical}. In the present multichromatic and spatially extended setting, our derivation builds on \cite{bertoli2024stability}, where analogous equations were derived for multichromatic potentials without a spatial variable, and on \cite{bertoli2025phase}, which introduced spatial dependence via a random graph structure in the case without a confining potential.

\subsubsection{Spatially homogeneous reduction ($v_0 \neq 0$)}\label{subsubsec:homog-reduction}

When $v_0 \neq 0$, the transport term is present and it is not straightforward to derive self-consistency equations for the full spatially dependent problem. However, if we restrict attention to spatially homogeneous solutions $\rho(\bx,\theta,t) \equiv \rho(\theta,t)$, the transport term vanishes. Since $\rho$ is independent of $\bx'$, the angular integrals factor out of the spatial convolution, and the interaction field becomes
\[
    I[\rho](\bx,\theta) = w(\bx)\sum_{k=1}^n k a_k \big[\sin(k\theta)\,r_{k,c} - \cos(k\theta)\,r_{k,s}\big],
\]
where
\[
    r_{k,c} := \int_0^{2\pi}\cos(k\theta)\,\rho(\theta)\dd\theta, \qquad r_{k,s} := \int_0^{2\pi}\sin(k\theta)\,\rho(\theta)\dd\theta
\]
are the global Fourier moments, and $w(\bx) := \int_{\R^2} K(\bx,\bx')\dd\bx'$. For spatial homogeneity to be self-consistent (i.e., for a spatially homogeneous $\rho(\theta)$ to produce an $\bx$-independent drift) the function $w$ must be constant: $w(\bx) \equiv \kappa$ for some $\kappa > 0$. Both standard kernels satisfy this condition:
\begin{itemize}
    \item \emph{Unnormalised}: $w(\bx) = \int_{|\bx-\bx'|\leq R}\dd\bx' = \pi R^2$ (on the full plane, or on a periodic domain), so $\kappa = \pi R^2$.
    \item \emph{Fully normalised}: since $\int_0^{2\pi}\rho(\theta)\dd\theta = 1$ for a spatially homogeneous density, the local mass is $n(\bx) = \pi R^2$, giving $w(\bx) = \pi R^2 / n(\bx) = 1$, so $\kappa = 1$.
\end{itemize}
Under this assumption the interaction field simplifies to
\begin{equation*}\label{eq:I-homog}
    I[\rho](\theta) = \kappa\sum_{k=1}^n k a_k \big[\sin(k\theta)\,r_{k,c} - \cos(k\theta)\,r_{k,s}\big],
\end{equation*}
and the angular drift $b(\theta) := F - h\sin(\theta) - \gamma I[\rho](\theta)$ defines the angular potential $U$ via $\partial_\theta U = -b$, giving
\begin{equation*}\label{eq:U-homog}
    U(\theta) = -F\theta - h\cos(\theta) - \gamma\kappa\sum_{k=1}^n a_k\big[r_{k,c}\cos(k\theta) + r_{k,s}\sin(k\theta)\big].
\end{equation*}
Integrating the stationary Fokker-Planck equation once in $\theta$ gives the standard expression for the density of a one-dimensional Fokker-Planck equation with periodic forcing and constant drift:
\begin{equation}\label{eq:self-consist-homog}
    \rho(\theta) = \frac{1}{Z}\,e^{-\Gamma^{-1} U(\theta)} \int_\theta^{\theta + 2\pi} e^{\Gamma^{-1} U(\varphi)}\dd\varphi,
\end{equation}
where $Z$ is the normalising constant ensuring $\int_0^{2\pi}\rho(\theta)\dd\theta = 1$. Together with the definitions of $r_{k,c}$ and $r_{k,s}$ in terms of $\rho$, equation~\eqref{eq:self-consist-homog} constitutes a closed self-consistency system for the spatially homogeneous stationary state. In what follows we set $\kappa = 1$ for ease of notation.

\begin{rem}[Periodicity of the self-consistency integral]\label{rem:periodicity}
When $F \neq 0$, the potential $U$ is quasi-periodic and the integral $\int_\theta^{\theta+2\pi} e^{\Gamma^{-1} U(\varphi)}\dd\varphi$ is not $2\pi$-periodic in $\theta$. To obtain a representation that is both periodic and convenient for numerical evaluation, we substitute $t = \varphi - \theta$ and write
\begin{equation}\label{eq:calI}
    \cI(\theta) := \int_0^{2\pi} \exp\bigg(-\Gamma^{-1}\bigg[Ft + h\cos(t+\theta) + \gamma\kappa\sum_{k=1}^n a_k\big[r_{k,c}\cos(k(t+\theta)) + r_{k,s}\sin(k(t+\theta))\big]\bigg]\bigg)\dd t.
\end{equation}
We can verify directly that $\cI(\theta + 2\pi) = \cI(\theta)$: shifting the integration variable in $\cI(\theta + 2\pi)$ and using the $2\pi$-periodicity of all trigonometric factors absorbs the quasi-periodic part, since the linear term $Ft$ depends only on the integration variable $t$ and not on $\theta$. This representation also has the practical advantage that the integration limits are independent of $\theta$.
\end{rem}

One can prove that \eqref{eq:self-consist-homog} has a unique solution through a standard Banach fixed point argument, for large enough noise strength $\Gamma$.

\begin{lem}\label{lem:uniqueness-homog}
For $\Gamma$ sufficiently large, the map $T:\mathcal{P}\to\mathcal{P}$ defined by
\[
    (T\rho)(\theta) := \frac{1}{Z[\rho]}\,e^{-\Gamma^{-1} U[\rho](\theta)} \int_\theta^{\theta + 2\pi} e^{\Gamma^{-1} U[\rho](\varphi)}\,\mathrm{d}\varphi,
\]
is a contraction on $(\mathcal{P}, \|\cdot\|_{L^1})$, and therefore admits a unique fixed point.
\end{lem}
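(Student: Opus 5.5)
The plan is a direct Banach fixed-point argument, exploiting that $T$ depends on $\rho$ only through the finitely many moments $r_{k,c}, r_{k,s}$, $1\le k\le n$. Write $\mathbf r[\rho]=(r_{k,c},r_{k,s})_{k=1}^n\in\R^{2n}$. Since $|\cos(k\theta)|,|\sin(k\theta)|\le 1$, we have
\[
|r_{k,c}[\rho]-r_{k,c}[\sigma]|,\ |r_{k,s}[\rho]-r_{k,s}[\sigma]|\le \|\rho-\sigma\|_{L^1}.
\]
Hence $\rho\mapsto \mathbf r[\rho]$ is $1$-Lipschitz from $L^1$ into $(\R^{2n},\|\cdot\|_\infty)$, and its image lies in the fixed cube $Q=[-1,1]^{2n}$. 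So it suffices to show that the map $S:\mathbf r\mapsto T\rho$, where $T\rho$ is defined through \eqref{eq:self-consist-homog} with moments $\mathbf r$, satisfies
\[
\|S(\mathbf r)-S(\mathbf r')\|_{L^1}\le L(\Gamma)\,\|\mathbf r-\mathbf r'\|_\infty \quad\text{on } Q,
\]
with $L(\Gamma)<1$ for $\Gamma$ large. I would also note that $T$ maps $\mathcal P$ into itself, since $T\rho>0$ and is normalised by $Z$. Finally, $\mathcal P$ with the $L^1$ norm is a closed subset of $L^1$, hence complete.

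To obtain the Lipschitz bound, I would use the periodic representation of Remark~\ref{rem:periodicity}:
\[
S(\mathbf r)(\theta)=\frac{1}{Z}\,e^{-\Gamma^{-1}\tilde U(\theta)}\,\cI(\theta),
\]
where $\tilde U$ is the periodic part of $U$ (the $-F\theta$ term cancels against the $Ft$ factor). Put $\beta=\Gamma^{-1}$, and write $V_{\mathbf r}(\theta)=h\cos\theta+\gamma\sum_k a_k[r_{k,c}\cos k\theta+r_{k,s}\sin k\theta]$. Then the unnormalised density is
\[
g_{\mathbf r}(\theta)=\int_0^{2\pi} e^{-\beta Ft}\,e^{\beta (V_{\mathbf r}(\theta)-V_{\mathbf r}(\theta+t))}\dd t.
\]
For $\mathbf r\in Q$ we have the uniform bounds $\|V_{\mathbf r}\|_\infty\le h+\gamma A$, with $A=\sum_k a_k$, and $\|V_{\mathbf r}-V_{\mathbf r'}\|_\infty\le \gamma A\|\mathbf r-\mathbf r'\|_\infty$. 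Using $|e^x-e^y|\le e^{\max(x,y)}|x-y|$, this gives
\[
|g_{\mathbf r}-g_{\mathbf r'}|(\theta)\le 2\beta\gamma A\, e^{4\beta(h+\gamma A)}\,\|\mathbf r-\mathbf r'\|_\infty\int_0^{2\pi}e^{-\beta F t}\dd t.
\]
We also have the lower bound $g_{\mathbf r}\ge e^{-4\beta(h+\gamma A)}\int_0^{2\pi}e^{-\beta Ft}\dd t$. The common factor $\int_0^{2\pi} e^{-\beta F t}\dd t$ cancels upon normalisation.

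Next, for positive functions $g,g'$ with $Z=\|g\|_{L^1}$ and $Z'=\|g'\|_{L^1}$, the standard estimate is
\[
\|g/Z-g'/Z'\|_{L^1}\le 2\|g-g'\|_{L^1}/Z.
\]
Applying it yields
\[
\|T\rho-T\sigma\|_{L^1}\le 4\beta\gamma A\,e^{8\beta(h+\gamma A)}\,\|\rho-\sigma\|_{L^1}.
\]
As $\Gamma\to\infty$, i.e.\ $\beta\to0$, the prefactor tends to $0$ like $\Gamma^{-1}$. It is therefore $<1$ for $\Gamma$ sufficiently large, with $h,\gamma,F,a_k$ fixed. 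Banach's theorem then gives the unique fixed point, which is the unique solution of \eqref{eq:self-consist-homog}.

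The main obstacle is the bookkeeping with the non-periodic tilt term. Working with the naive expression $e^{-\beta U(\theta)}\int_\theta^{\theta+2\pi}e^{\beta U}$ introduces factors $e^{\pm\beta F\theta}$, which are harmless but obscure the cancellation. The representation \eqref{eq:calI} handles this cleanly: the $F$-dependence enters only through the weight $e^{-\beta Ft}$, which is independent of $\mathbf r$ and drops out after normalisation. The resulting threshold on $\Gamma$ is therefore uniform in $F$. The other point requiring care is that the exponential constants stay bounded, which follows from $|r_{k,\cdot}|\le1$ on $\mathcal P$.
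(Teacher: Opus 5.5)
Your argument is correct and follows essentially the same Banach fixed-point route as the paper: Lipschitz dependence of the potential on the moments, the bound $|e^a-e^b|\le|a-b|e^{\max(a,b)}$, a uniform lower bound on $Z$, and the quotient estimate; your use of the periodic representation \eqref{eq:calI}, which makes the constant explicit and uniform in $F$, is a small refinement of the same idea. One harmless slip: on the full cube $Q$ one only has $\|V_{\mathbf r}-V_{\mathbf r'}\|_\infty\le\sqrt2\,\gamma A\|\mathbf r-\mathbf r'\|_\infty$ and $\|V_{\mathbf r}\|_\infty\le h+\sqrt2\,\gamma A$, but writing $V_{\mathbf r[\rho]}(\theta)-V_{\mathbf r[\sigma]}(\theta)=\gamma\sum_k a_k\int_0^{2\pi}\cos(k(\theta-\varphi))(\rho-\sigma)(\varphi)\dd\varphi$ recovers your constants directly in terms of $\|\rho-\sigma\|_{L^1}$, and in any case the prefactor is $O(\Gamma^{-1})$.
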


\begin{proof}
We work throughout on the space $\mathcal{P}$ of $2\pi$-periodic probability densities on $[0,2\pi]$, regarded as a closed subset of $L^1([0,2\pi])$. We first note that $T$ maps $\mathcal{P}$ into itself, which follows from the definition and positivity of $Z[\rho]$. We now establish the contraction. For $\rho,\tilde\rho\in\mathcal{P}$, the two potentials differ only through the Fourier moments $r_{k,c}$ and $r_{k,s}$ , so
\begin{equation}\label{eq:U-Lip}
    \|U[\rho]-U[\tilde\rho]\|_{L^\infty} \leq 2\gamma\sum_{k=1}^n |a_k|\,\|\rho-\tilde\rho\|_{L^1} =: \tilde{A}\,\|\rho-\tilde\rho\|_{L^1},
\end{equation}
using $|\cos(k\theta)|,|\sin(k\theta)|\leq 1$ and $|r_{k,c}[\rho]-r_{k,c}[\tilde\rho]|\leq\|\rho-\tilde\rho\|_{L^1}$. Setting $K[\rho](\theta) := e^{-\Gamma^{-1}U[\rho](\theta)}\int_\theta^{\theta+2\pi}e^{\Gamma^{-1}U[\rho](\varphi)}\,\mathrm{d}\varphi$, the exponential inequality $|e^a - e^b| \leq |a-b|e^{\max(a,b)}$ and the uniform bound $\|U[\rho]\|_{L^\infty}\leq C_0$ for all $\rho\in\mathcal{P}$ together give
\[
    \|K[\rho]-K[\tilde\rho]\|_{L^\infty} \leq C_1\Gamma^{-1}\|U[\rho]-U[\tilde\rho]\|_{L^\infty},
\]
for a constant $C_1 > 0$ depending only on $C_0$. For the normalisation, since $Z[\rho] = \int_0^{2\pi}K[\rho]\,\mathrm{d}\theta$ and $Z[\rho]\geq z_0>0$ uniformly, we have $|Z[\rho]^{-1} - Z[\tilde\rho]^{-1}|\leq z_0^{-2}|Z[\rho]-Z[\tilde\rho]|$. Decomposing $T\rho - T\tilde\rho = Z[\rho]^{-1}(K[\rho]-K[\tilde\rho]) + K[\tilde\rho](Z[\rho]^{-1}-Z[\tilde\rho]^{-1})$ and taking $L^1$-norms, the above estimates give
\[
    \|T\rho - T\tilde\rho\|_{L^1} \leq C_2\,\Gamma^{-1}\|\rho-\tilde\rho\|_{L^1},
\]
for a constant $C_2>0$ independent of $\Gamma$ and of $\rho,\tilde\rho\in\mathcal{P}$. For $\Gamma > C_2$ the map $T$ is therefore a contraction on the complete metric space $(\mathcal{P},\|\cdot\|_{L^1})$, and the Banach fixed-point theorem allows us to conclude uniqueness of the fixed point.
\end{proof}

\subsubsection{Full spatial dependence ($v_0 = 0$)}\label{subsubsec:full-spatial}

When the assumption of spatial homogeneity is dropped, a self-consistency equation for the full spatially dependent stationary state can still be derived, provided the transport term is absent, i.e.\ $v_0 = 0$. In this case, the stationary Fokker-Planck equation reduces to
\begin{equation*}\label{eq:stat-v0}
    0 = -\partial_\theta\Big(\big[F - h\sin(\theta) - \gamma I[\rho](\bx,\theta)\big]\,\rho(\bx,\theta)\Big) + \Gamma\,\partial_\theta^2 \rho(\bx,\theta).
\end{equation*}
We introduce the local Fourier moments
\[
    m_{k,c}(\bx) = \int_0^{2\pi} \cos(k\theta)\,\rho(\bx,\theta)\dd\theta, \qquad m_{k,s}(\bx) = \int_0^{2\pi} \sin(k\theta)\,\rho(\bx,\theta)\dd\theta,
\]
and the spatially convolved moments
\[
    C_k(\bx) := \int_{\R^2} K(\bx,\bx')\, m_{k,c}(\bx')\dd\bx', \qquad S_k(\bx) := \int_{\R^2} K(\bx,\bx')\, m_{k,s}(\bx')\dd\bx'.
\]
Applying the identity $\sin(k(\theta-\theta')) = \sin(k\theta)\cos(k\theta') - \cos(k\theta)\sin(k\theta')$, the interaction field becomes
\begin{equation*}\label{eq:I-fourier}
    I[\rho](\bx,\theta) = \sum_{k=1}^n k a_k \big[\sin(k\theta)\,C_k(\bx) - \cos(k\theta)\,S_k(\bx)\big].
\end{equation*}
The same integration argument as in \S\ref{subsubsec:homog-reduction}, now applied pointwise in $\bx$, gives the self-consistency equation
\begin{equation}\label{eq:self-consist-full}
    \rho(\bx,\theta) = \frac{1}{Z(\bx)}\,e^{-\Gamma^{-1} U(\bx,\theta)} \int_\theta^{\theta+2\pi} e^{\Gamma^{-1} U(\bx,\varphi)}\dd\varphi,
\end{equation}
where the potential $U$ is defined by $\partial_\theta U = -b$ with $b(\bx,\theta) := F - h\sin(\theta) - \gamma I[\rho](\bx,\theta)$, giving
\begin{equation*}\label{eq:potential-U}
    U(\bx,\theta) = -F\theta - h\cos(\theta) - \gamma\sum_{k=1}^n a_k\big[C_k(\bx)\cos(k\theta) + S_k(\bx)\sin(k\theta)\big],
\end{equation*}
and $Z(\bx)$ is the normalising constant ensuring $\int_0^{2\pi}\rho(\bx,\theta)\dd\theta = 1$. Together with the definitions of $C_k$ and $S_k$ in terms of $\rho$, equation~\eqref{eq:self-consist-full} constitutes a closed self-consistency system for the stationary state.

\begin{rem}
The uniqueness result of Lemma~\ref{lem:uniqueness-homog} extends to the self-consistency equation \eqref{eq:self-consist-full} by the same Banach fixed-point argument, working now in the space $L^\infty(\mathbb{R}^2;L^1([0,2\pi]))$ equipped with the norm $\|\rho\|_{L^\infty_\mathbf{x} L^1_\theta} := \sup_{\mathbf{x}}\int_0^{2\pi}|\rho(\mathbf{x},\theta)|\,\mathrm{d}\theta$. The argument proceeds pointwise in $\mathbf{x}$, and the only additional assumption required is that the convolution kernel satisfies
\[
    \kappa := \sup_{\mathbf{x}\in\mathbb{R}^2}\int_{\mathbb{R}^2}|K(\mathbf{x},\mathbf{x}')|\,\mathrm{d}\mathbf{x}' = C < \infty,
\]
under which the Lipschitz estimate \eqref{eq:U-Lip} becomes $\|U[\rho]-U[\tilde\rho]\|_{L^\infty_{\mathbf{x},\theta}} \leq \tilde{A}C \,\|\rho-\tilde\rho\|_{L^\infty_\mathbf{x} L^1_\theta}$, with $\tilde{A}C$ in place of $\tilde{A}$. As all the kernels we consider are bounded functions on $[0,1]^2$, this assumption always holds.
\end{rem}

\section{Reference case: $h=0$}\label{sec:h0}

Having established the self-consistency framework for stationary states, we turn to the linear stability analysis of the disordered state. We begin with the case $h=0$, $F$ arbitrary, in which the uniform density $\rho_0 = \frac{1}{2\pi}$ is a stationary solution of \eqref{eq:FP}, and the classical question is to determine the critical alignment strength $\gamma_c$ at which $\rho_0$ loses stability. We work with the monochromatic interaction ($n=1$, $a_1=1$) throughout this section.

\subsection{Linearisation about the uniform state}\label{subsec:linearisation}

We linearise \eqref{eq:FP} around $\rho_0$ by writing $\rho(\bx,\theta,t) = \rho_0 + \varepsilon\,\phi(\bx,\theta,t)$, $0 < \varepsilon \ll 1$. Since the interaction kernel is odd in $\theta - \theta'$, we have $I[\rho_0] = 0$ in all four normalisation regimes, so that $\rho\, I[\rho] = \rho_0\, I_{\mathrm{lin}}[\phi] + O(\varepsilon^2)$, where $I_{\mathrm{lin}}$ denotes the linearised interaction operator. The exact form of this operator will depend on the choice of normalisation, but the structure of the linearised equation is the same for all cases:
\begin{equation}\label{eq:linearised-h0}
    \partial_t \phi = -v_0(\cos(\theta),\sin(\theta))\cdot\nabla_\bx \phi
    + \Gamma\,\partial_\theta^2 \phi
    + \gamma\rho_0\,\partial_\theta I_{\mathrm{lin}}[\phi]
    - F\,\partial_\theta\phi.
\end{equation}
We expand $\phi$ in spatial and angular Fourier modes:
\[
    \phi(\bx,\theta,t) = \sum_{\bk \in \Z^2}\sum_{m \in \Z}
    \hat\phi_m(\bk,t)\, e^{i\bk\cdot\bx}\, e^{im\theta}.
\]
In this representation, the diffusion and tilt terms are diagonal:  $\Gamma\,\partial_\theta^2 e^{im\theta} = -\Gamma m^2\, e^{im\theta}$, and 
$-F\,\partial_\theta e^{im\theta} = -imF\, e^{im\theta}$. In particular, the contribution of the tilt $F$ is purely imaginary and hence does not influence the real part of the spectrum.

The transport term couples neighbouring angular Fourier modes: using trigonometric identities, we find that
\[
    \hat\phi_m(\bk) \;\longmapsto\;
    -\frac{iv_0}{2}\Big[(k_x + ik_y)\,\hat\phi_{m-1}(\bk)
    + (k_x - ik_y)\,\hat\phi_{m+1}(\bk)\Big],
\]
where $\bk = (k_x, k_y)$. The resulting infinite tridiagonal matrix is skew-Hermitian, and therefore its eigenvalues are purely imaginary. Thus, neither the tilt nor the transport term affects the location of the stability threshold.

It remains to analyse the alignment operator.
The angular integration selects only the first harmonics:
\begin{equation*}\label{eq:angular-selection}
    \int_0^{2\pi} \sin(\theta - \theta')\, e^{im\theta'}\dd\theta'
    = \begin{cases}
        -i\pi\, e^{i\theta}, & m = 1, \\
        \phantom{-}i\pi\, e^{-i\theta}, & m = -1, \\
        0, & \text{otherwise}.
    \end{cases}
\end{equation*}
The spatial convolution over the ball $B_R(\bx)$ produces a scalar factor. Specifically, introducing the substitution $\bs = \bx' - \bx$ and evaluating in polar coordinates using the standard Bessel identity $\frac{\dd}{\dd z}(zJ_1(z)) = zJ_0(z)$, one obtains
\begin{equation*}\label{eq:bessel-factor}
    \frac{1}{\pi R^2}\int_{|\bs| \leq R} e^{i\bk\cdot\bs}\dd\bs
    = \frac{2J_1(|\bk|R)}{|\bk|R} =: S_R(|\bk|),
\end{equation*}
where $J_1$ is the Bessel function of the first kind. The function $S_R$ satisfies $ S_R(|\bk|) \leq 1$, with $S_R(|\bk|) = 1$ if and only if $\bk = 0$.

Combining these observations, we find that for each normalisation choice there exists a real constant $A = A(\gamma)$ such that the linearised interaction acts on Fourier modes according to
\[
    \hat\phi_m(\bk) \;\longmapsto\;
    \begin{cases}
        A\, S_R(|\bk|)\, \hat\phi_m(\bk), & m = \pm 1, \\
        0, & m \neq \pm 1.
    \end{cases}
\]
The real part of the growth rate for mode $(\bk,m)$ is therefore
\begin{equation}\label{eq:growth-rate}
    \mathrm{Re}\,\lambda_m(\bk) =
    \begin{cases}
        A\, S_R(|\bk|) - \Gamma, & m = \pm 1, \\
        -\Gamma m^2, & m \neq \pm 1.
    \end{cases}
\end{equation}
It follows that only the modes $m = \pm 1$ can become unstable, and that among these the largest growth rate is attained at $\bk =0$. The critical value of the coupling is therefore determined by the condition $A(\gamma_c) = \Gamma$.

\subsection{Critical thresholds for the four normalisations}\label{subsec:thresholds}

It remains to compute the prefactor $A$ for each normalisation variant introduced in Section~\ref{subsec:normalisations}. The Fourier analysis carried out above applies in all cases; the only difference lies in the linearisation of the interaction term.

\paragraph{Fully normalised interaction}

Writing $I[\rho] = N[\rho]/D[\rho]$ with $N[\rho] = \int_{|\bx-\bx'|\leq R}\int_0^{2\pi}\sin(\theta-\theta')\,\rho\dd\theta'\dd\bx'$ and $D[\rho] = \int_{|\bx-\bx'|\leq R}\int_0^{2\pi}\rho\dd\theta'\dd\bx'$, a first-order expansion gives
\[
    I[\rho_0 + \varepsilon\phi]
    = \varepsilon\,\frac{N[\phi]}{D[\rho_0]} + O(\varepsilon^2)
    = \frac{\varepsilon}{\pi R^2}\int_{|\bx-\bx'|\leq R}\int_0^{2\pi}
    \sin(\theta-\theta')\,\phi(\bx',\theta')\dd\theta'\dd\bx',
\]
The linearised interaction in Fourier variables is therefore $\widehat{I_{\mathrm{lin}}[\phi]}_m(\bk) = \pi\, S_R(|\bk|)\,\hat\phi_m(\bk)$ for $m = \pm 1$ and zero otherwise, giving the prefactor $A = \gamma\rho_0\,\pi = \gamma/2$ and the critical threshold
\begin{equation*}\label{eq:gc-normalised}
    \gamma_c = 2\Gamma.
\end{equation*}

\paragraph{Unnormalised interaction}

Since there is no denominator to expand, $I_{\mathrm{lin}}[\phi] = I[\phi]$ directly. The spatial convolution now contributes the full factor $\pi R^2$, so $\widehat{I_{\mathrm{lin}}[\phi]}_m(\bk) = \pi^2 R^2\, S_R(|\bk|)\,\hat\phi_m(\bk)$ for $m = \pm 1$. This gives $A = \gamma\pi R^2/2$ and
\begin{equation*}\label{eq:gc-unnormalised}
    \gamma_c = \frac{2\Gamma}{\pi R^2}.
\end{equation*}

\paragraph{Partial normalisation in $\theta$}

In this case we write $I[\rho] = J[\rho]/f[\rho]$ where $J[\rho]$ is the unnormalised numerator and $f[\rho](\bx) = \int_0^{2\pi}\rho(\bx,\theta)\dd\theta$ is the angular mass. Since $J$ is linear in $\rho$, the quotient rule gives $I_{\mathrm{lin}}[\phi] = J[\phi]/f[\rho_0] = J[\phi]$. This is identical to the unnormalised case, giving
\begin{equation*}\label{eq:gc-theta-norm}
    \gamma_c = \frac{2\Gamma}{\pi R^2}.
\end{equation*}

\paragraph{Partial normalisation in $\bx$}

Finally, suppose that $I[\rho] = J[\rho]/g[\rho]$, where $g[\rho](\bx,\theta) = \int_{|\bx-\bx'|\leq R}\rho(\bx',\theta)\dd\bx'$. At the uniform state, $J[\rho_0] = 0$ and $g[\rho_0] = \pi R^2 \rho_0 = R^2/2$. Again using $J[\rho_0] = 0$, the linearisation gives $I_{\mathrm{lin}}[\phi] = J[\phi]/g[\rho_0] = (2/R^2)\,J[\phi]$. In Fourier variables, $\widehat{I_{\mathrm{lin}}[\phi]}_m(\bk) = 2\pi\, S_R(|\bk|)\,\hat\phi_m(\bk)$ for $m = \pm 1$.
The prefactor is $A = \gamma\pi$, giving
\begin{equation*}\label{eq:gc-x-norm}
    \gamma_c = \frac{\Gamma}{\pi}.
\end{equation*}

\medskip

For ease of reference, we collect the results in Table~\ref{tab:thresholds}.

\begin{table}[H]
\centering
\renewcommand{\arraystretch}{1.4}
\begin{tabular}{lc}
\hline
\textbf{Normalisation} & \textbf{Critical $\gamma_c$} \\
\hline
Fully normalised & $2\Gamma$ \\
Unnormalised & $2\Gamma/(\pi R^2)$ \\
Partial in $\theta$ & $2\Gamma/(\pi R^2)$ \\
Partial in $\bx$  & $\Gamma / \pi$ \\
\hline
\end{tabular}
\caption{Critical alignment strength for the four normalisation variants. In each case, the instability is carried by the spatially homogeneous modes $(\bk,m) = (0,\pm 1)$.}
\label{tab:thresholds}
\end{table}

\subsection{Justification of spatial homogeneity}\label{subsec:spatial-homog}

A key consequence of the analysis above is that the leading instability is always spatially homogeneous. Indeed, the growth rate \eqref{eq:growth-rate} for the critical modes $m = \pm 1$ satisfies
\[
    \mathrm{Re}\,\lambda_{\pm 1}(\bk) = A\,S_R(|\bk|) - \Gamma
    < A - \Gamma = \mathrm{Re}\,\lambda_{\pm 1}(\mathbf{0})
    \qquad \text{for all } \bk \neq 0,
\]
since $S_R(|\bk|) < 1$ whenever $\bk \neq 0$. All modes with $m \neq \pm 1$ have $\mathrm{Re}\,\lambda_m = -\Gamma m^2 < 0$ and are therefore always stable. Consequently, the loss of stability of $\rho_0$ is triggered by the spatially homogeneous modes $(\bk,m) = (0,\pm 1)$, and the critical thresholds in Table~\ref{tab:thresholds} coincide with those obtained under a spatial homogeneity assumption.
This provides a justification, at the level of the linearised mean-field equation, for restricting the analysis to spatially homogeneous perturbations when determining the onset of instability.

Finally, we comment on the role of the tilt parameter $F$ in the $h=0$ regime.
The tilt enters the linearised equation \eqref{eq:linearised-h0} through the operator $C\phi = -F\,\partial_\theta\phi$, which acts on the angular Fourier basis as $C\,e^{im\theta} = -imF\,e^{im\theta}$. This contribution is purely imaginary and therefore shifts only the imaginary parts of the eigenvalues, leaving the growth rates \eqref{eq:growth-rate} unchanged.

\begin{prop}\label{prop:F-invariance}
    For $h = 0$, the critical alignment strength $\gamma_c$ is independent of the tilt parameter $F$, for all four normalisation variants.
\end{prop}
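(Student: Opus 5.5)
The plan is to show that $F$ enters the linearised operator \eqref{eq:linearised-h0} only through a term that commutes with the rest and contributes a purely imaginary shift on each invariant block. Write the generator as $\mathcal L_F = \mathcal L_0 + C$ with $C = -F\partial_\theta$. The point is that $I[\rho_0]=0$ holds independently of $F$ in all four variants, and $\rho_0=\frac1{2\pi}$ is stationary for every $F$ (since $\partial_\theta(F\rho_0)=0$). Hence the linearised interaction $I_{\mathrm{lin}}$, and so the constant $A(\gamma)$ computed in Section~\ref{subsec:thresholds}, does not depend on $F$.

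First, I would decompose in spatial Fourier modes $\bk$, which is legitimate because all terms are translation-invariant in $\bx$. For each fixed $\bk$ the operator acts on angular coefficients $(\hat\phi_m)_{m\in\Z}$. Next, I would note that $C$ commutes with $\partial_\theta^2$ and with $\partial_\theta I_{\mathrm{lin}}$, both of which are diagonal in $m$. For the transport term, commutation fails, but conjugation by the unitary rotation $R_t\phi(\theta)=\phi(\theta+Ft)$ removes $C$ at the cost of rotating the propulsion direction in time. This is the co-rotating frame argument.

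The cleanest rigorous route avoids this and uses the structure already established. For each $\bk$, the part of the operator that is not skew-adjoint is diagonal: it equals $-\Gamma m^2$ for $|m|\neq1$ and $A S_R(|\bk|)-\Gamma$ for $m=\pm1$. This is a self-adjoint diagonal operator $D_\bk$ independent of $F$. The remainder $-imF + T_\bk$, with $T_\bk$ the tridiagonal transport matrix, is skew-Hermitian. Since $\mathrm{Re}\langle \mathcal L\phi,\phi\rangle = \langle D_\bk\phi,\phi\rangle$, the numerical-range bound gives $\mathrm{Re}\,\sigma \le \max D_\bk = A S_R(|\bk|)-\Gamma \le A-\Gamma$. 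At $\bk=0$ the transport term vanishes, so the operator is exactly diagonal with eigenvalues $A-\Gamma - iF$ for $m=1$ and $A-\Gamma + iF$ for $m=-1$, so this bound is attained.

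Thus the supremum of the real spectrum is $A(\gamma)-\Gamma$ for every $F$. The threshold $A(\gamma_c)=\Gamma$ therefore yields, via Table~\ref{tab:thresholds}, the same $\gamma_c$ for each normalisation, independent of $F$.

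The main obstacle is making the upper bound rigorous for $\bk\neq0$ on the infinite-dimensional space, where the transport term couples all $m$. I would handle it with the dissipativity estimate above: $\mathcal L - (A-\Gamma)$ is dissipative and generates a contraction semigroup by Lumer–Phillips, so its spectrum lies in the closed left half-plane. This estimate uses only $F$-independent quantities.
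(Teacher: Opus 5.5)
Your conclusion is right, and your route is genuinely different from the paper's and more rigorous. The paper gives no separate proof. It notes that $-F\partial_\theta$ acts as $-imF$ on $e^{im\theta}$ and that the transport matrix is skew-Hermitian, concludes that neither term ``shifts'' real parts, and adds the co-rotating-frame heuristic $\theta\mapsto\theta+Ft$. That reasoning is exact only at $\bk=0$ (or $v_0=0$), where everything is diagonal in $m$. For $\bk\neq0$ the transport term does not commute with the diagonal part, and the real parts of the eigenvalues of the sum differ from the diagonal entries in an $F$-dependent way. Your observation that the Hermitian part $D_\bk$ of $\mathcal{L}_\bk$ is $F$-independent, so that $\mathrm{Re}\,\lambda\,\|\phi\|^2=\langle D_\bk\phi,\phi\rangle$ for every eigenpair, controls all $\bk$ at once. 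You are also right not to rely on the co-rotating frame, which makes the transport term time-dependent when $v_0\neq0$. Lumer--Phillips is more than you need: $\mathcal{L}_\bk$ is $\Gamma\partial_\theta^2$ plus a perturbation that is relatively bounded with bound zero, so it has compact resolvent, and pairing $\mathcal{L}_\bk\phi=\lambda\phi$ with $\phi$ suffices.

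One step is wrong as written. $D_\bk$ also has the entry $0$ on the mode $m=0$. At $\bk=0$ mass conservation removes this mode, but for $\bk\neq0$ the density fluctuation $\hat\phi_0(\bk)$ is unconstrained. So $\max D_\bk=\max\{AS_R(|\bk|)-\Gamma,\,0\}$, and $\mathcal{L}_\bk-(A-\Gamma)$ is not dissipative when $A<\Gamma$.

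Your claim that the supremum of the real spectrum equals $A(\gamma)-\Gamma$ for every $F$ is therefore false below threshold. Set $a_\bk:=AS_R(|\bk|)-\Gamma<0$. Second-order perturbation in $v_0$ from the $m=0$ mode gives a hydrodynamic eigenvalue $\lambda(\bk)\approx\frac{v_0^2|\bk|^2}{2}\,\frac{a_\bk}{a_\bk^2+F^2}$. For small $v_0$ this lies above $A-\Gamma$ and visibly depends on $F$. The paper's statement that every $m\neq\pm1$ mode has rate $-\Gamma m^2<0$ has the same oversight.

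The repair is immediate and keeps your conclusion. The corrected bound shows that for $A(\gamma)\le\Gamma$ no $\mathcal{L}_\bk$ has spectrum in the open right half-plane, for any $F$. For $A(\gamma)>\Gamma$ the bound becomes $A-\Gamma$ for every $\bk$, and it is attained by $A-\Gamma\mp iF$ at $\bk=0$. Hence $\gamma_c$ is characterised by $A(\gamma_c)=\Gamma$ and is independent of $F$. What is $F$-invariant is the sign change of the spectral abscissa, and its value above threshold, not the spectral abscissa itself.
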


The invariance has a simple physical interpretation: since the alignment interaction depends only on phase differences $\theta_i - \theta_j$, the transformation $\theta \mapsto \theta + Ft$ removes the tilt from the particle system \eqref{eq:sde-x}-\eqref{eq:sde-theta} without affecting any relative orientation. The effect of $F$ is therefore only to induce a uniform rotation of the entire population, without affecting the onset of alignment.

To illustrate Proposition \ref{prop:F-invariance}, we performed numerical simulations of the simplified one-dimensional, spatially homogeneous Fokker-Planck PDE
\begin{align*}
    \partial_t \rho = \Gamma \partial_{\theta}^2 \rho + F \partial_{\theta} \rho + \partial_{\theta}(\rho \partial_{\theta}(K * \rho)), \qquad \theta \in [0,2\pi],
\end{align*}
with periodic boundary conditions and kernel $K(\theta) = -\gamma \cos(\theta)$. For a given choice of $F$, we solve the PDE numerically using a Fourier spectral discretisation in $\theta$ with $N=60$ grid points on the interval $[0,2\pi]$. The diffusion coefficient is fixed to $\Gamma = 1$, so that the critical coupling predicted by the linear stability analysis is $\gamma_c = 2$. Simulations are run up to time $t_{\text{max}} = 300$, from a slightly perturbed initial condition $\rho(\theta,0) \propto \sin(\theta) + 2$, which is normalised to integrate to one. To test the independence of the transition from the tilt parameter, we perform a parameter sweep over several values of $\gamma$, taking $\gamma \in \{ 1.5, 2.0, 2.5, 3.5\}$ and repeating the simulations for three different tilt values $F \in \{0, 0.5, 1.0 \}$. Figure \ref{fig:F_invariance_profiles} shows the resulting steady state profiles.
\begin{figure}[t]
\centering

\begin{subfigure}{0.45\textwidth}
    \centering
    \includegraphics[width=\linewidth]{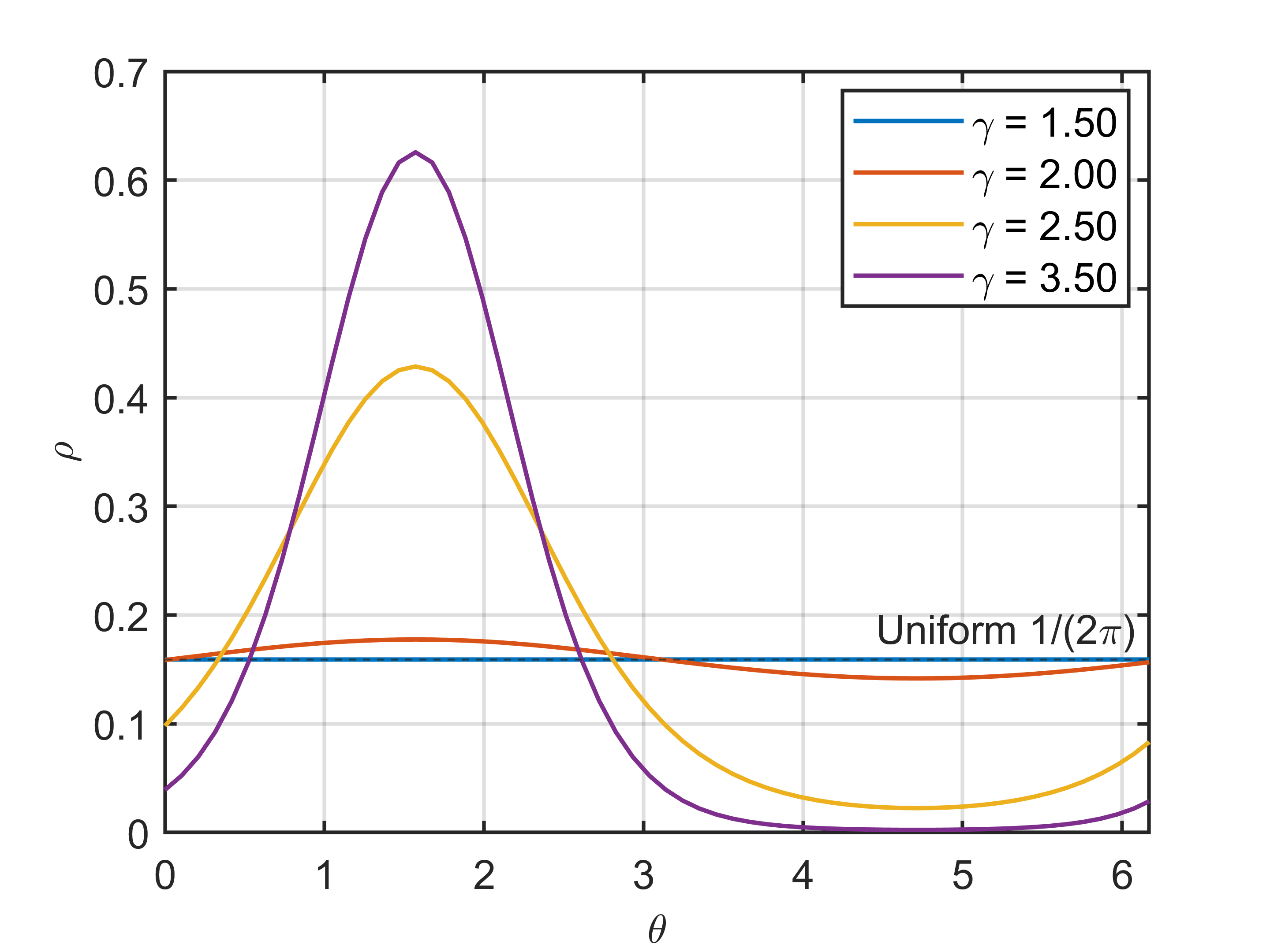}
    \caption{$F = 0$}
\end{subfigure}
\hfill
\begin{subfigure}{0.45\textwidth}
    \centering
    \includegraphics[width=\linewidth]{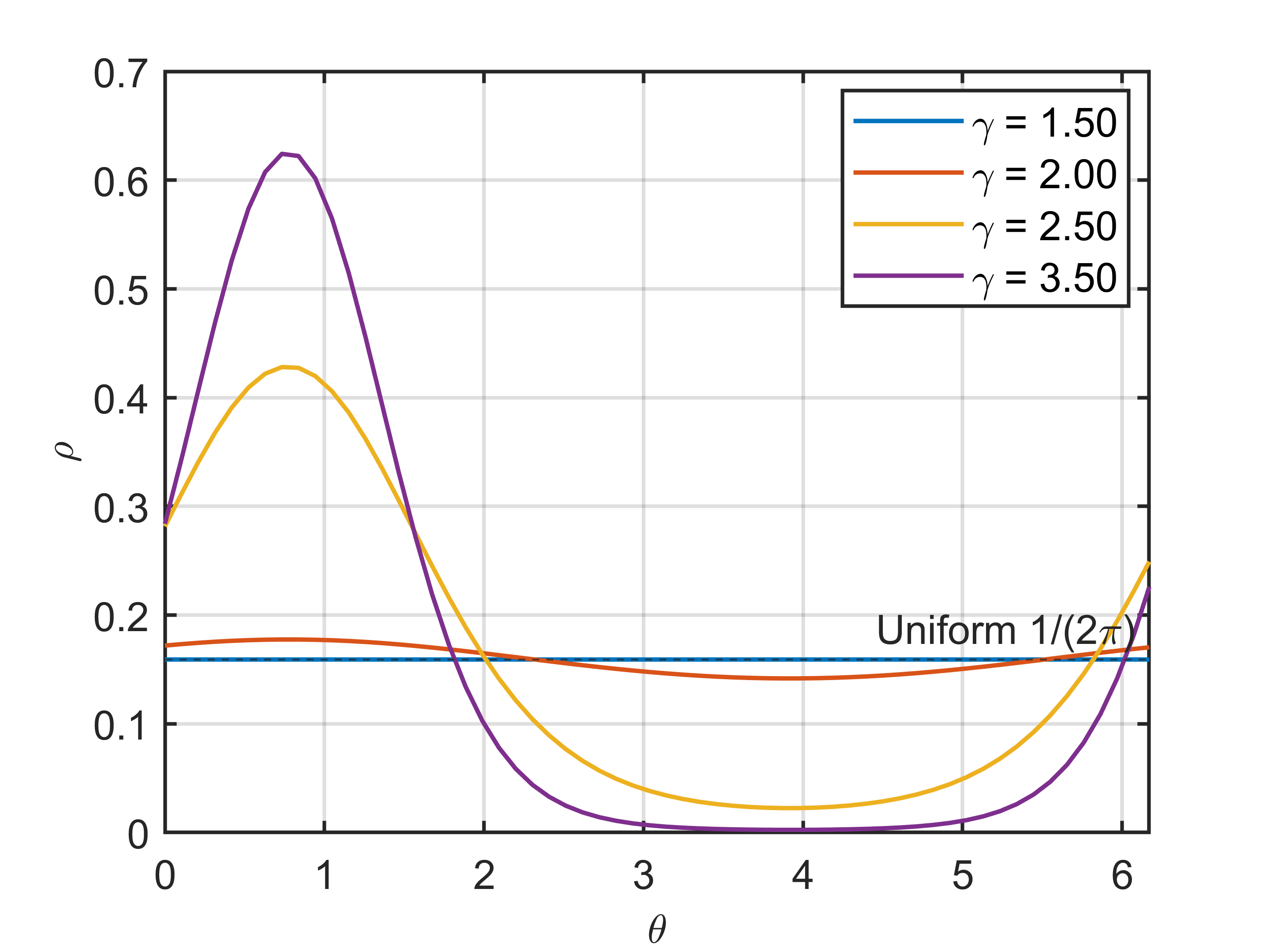}
    \caption{$F = 0.5$}
\end{subfigure}

\vspace{0.5cm}

\begin{subfigure}{0.6\textwidth}
    \centering
    \includegraphics[width=\linewidth]{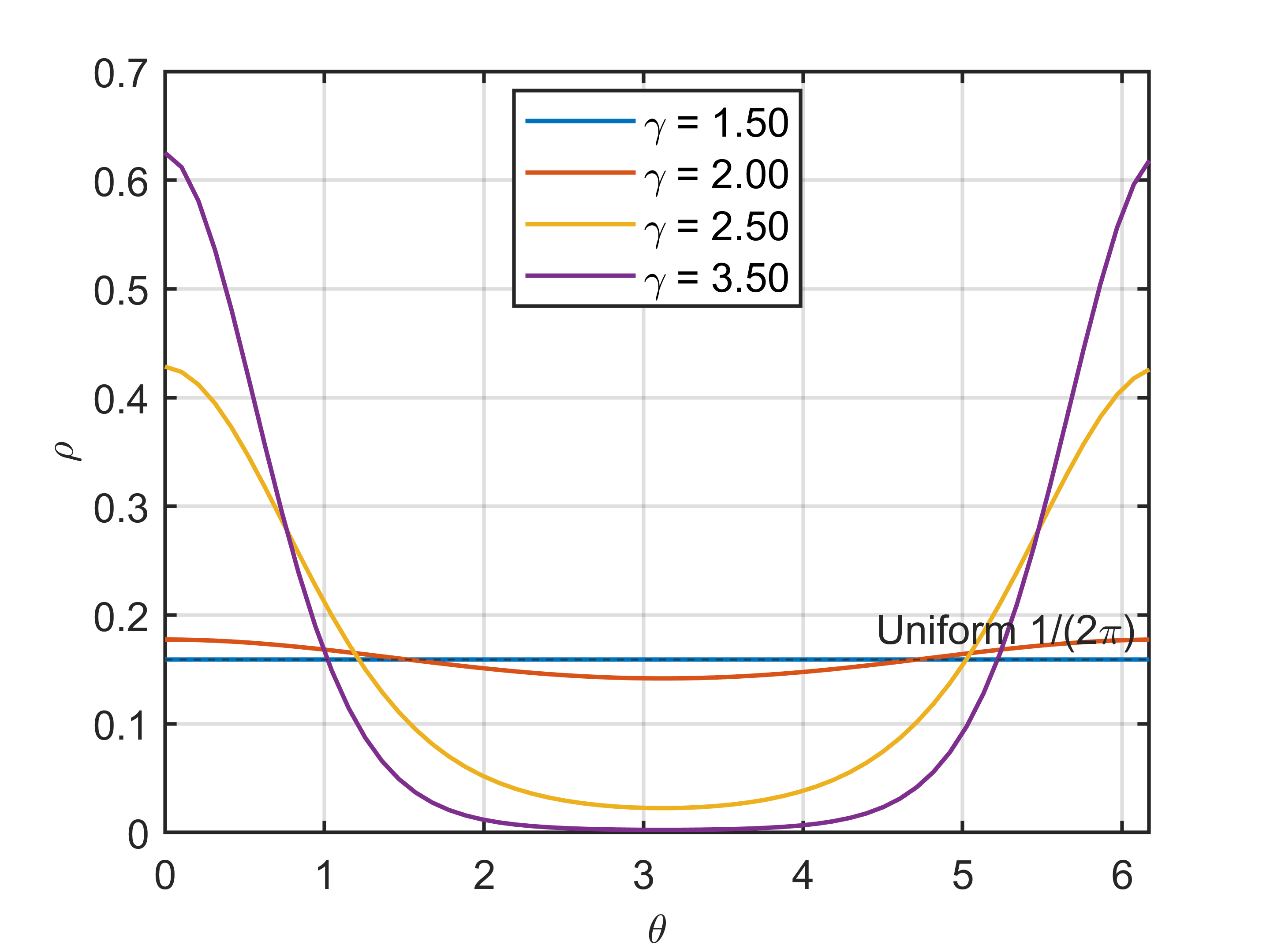}
    \caption{$F = 1$}
\end{subfigure}
\caption{Final density profiles $\rho(\theta,t_{\max})$ for several values of the alignment strength $\gamma$ and tilt parameter $F$. The transition from the uniform state to a clustered profile occurs at the same $\gamma$ for all $F$, consistent with Proposition~\ref{prop:F-invariance}.}

\label{fig:F_invariance_profiles}
\end{figure}

We emphasise, however, that this invariance breaks down when $h \neq 0$. While $F$ alone does not affect the threshold, the combination of $F$ and $h$ does, because the confining field $-h\sin(\theta)$ is not invariant under $\theta \mapsto \theta + Ft$, and the combined effect of $F$ and $h$ is no longer trivial. This is precisely the mechanism that enters the perturbative calculation of the critical coupling in Section~\ref{sec:h-neq-0}.

\subsection{Numerical verification}\label{subsec:numerics-h0}

We verify the analytical thresholds by solving the mean-field equation \eqref{eq:FP} with $h = 0$ numerically in a simplified setting where the spatial domain is one-dimensional, $x \in [0,1]$, with periodic boundary conditions. The density $\rho(x,\theta,t)$ is discretised using a Fourier spectral method with $40$ modes in each variable. Restricting to one spatial dimension does not affect the structure of the stability analysis; the only modification is the replacement of the two-dimensional ball area $\pi R^2$ by the one-dimensional interval length $2R$ in the critical thresholds. The corresponding one-dimensional values are:
\begin{itemize}
    \item Fully normalised: $\gamma_c = 2\Gamma$;
    \item Unnormalised: $\gamma_c = \Gamma/R$;
    \item Partial normalisation in $\theta$: $\gamma_c = \Gamma/R$;
    \item Partial normalisation in $\bx$: $\gamma_c = \Gamma/\pi$.
\end{itemize}

We initialise with a smooth perturbation of the uniform density, $\rho(x,\theta,0) \propto \cos(\theta)\,\sin(2\pi x) + 2$, normalised to unit mass, and evolve until $t = 250$ with parameters $\Gamma = 1$, $R = 0.2$, $v_0 = 0.1$. For each normalisation, we compute steady states at values of $\gamma$ slightly below and above $\gamma_c$. In all four cases, the numerics reproduce the predicted bifurcation structure: for $\gamma < \gamma_c$ the solution relaxes to the uniform state $\frac{1}{2\pi}$, while for $\gamma > \gamma_c$ a non-trivial orientational profile emerges. Representative results are shown in Figures~\ref{fig:unnormalised_pair}-\ref{fig:x-norm}.

\begin{figure}[t]
    \centering
    
    \begin{subfigure}[t]{0.49\textwidth}
        \centering
        \includegraphics[width=\linewidth]{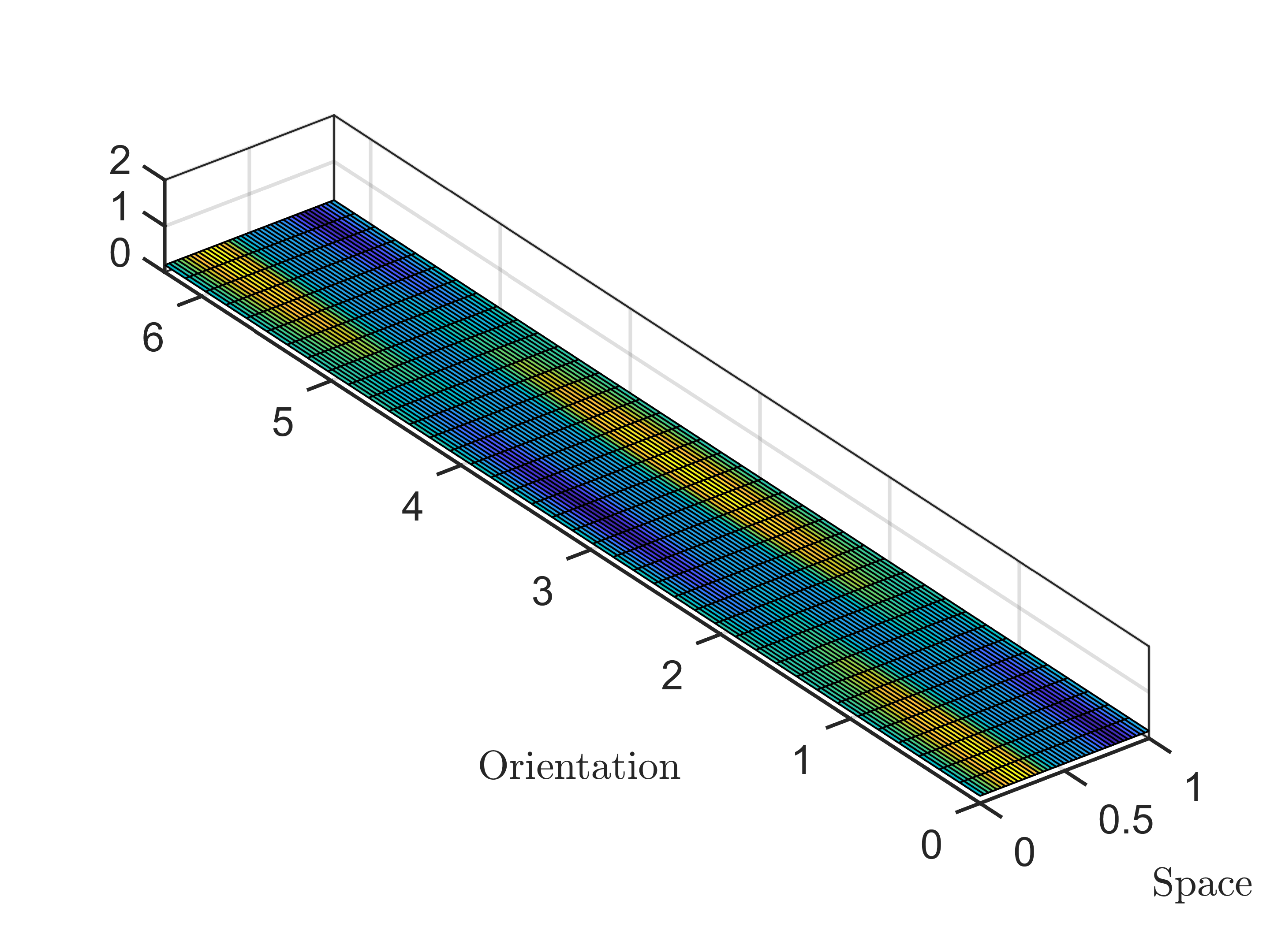}
        \caption{$\gamma = 4.5$}
    \end{subfigure}
    \hfill
    \begin{subfigure}[t]{0.49\textwidth}
        \centering
        \includegraphics[width=\linewidth]{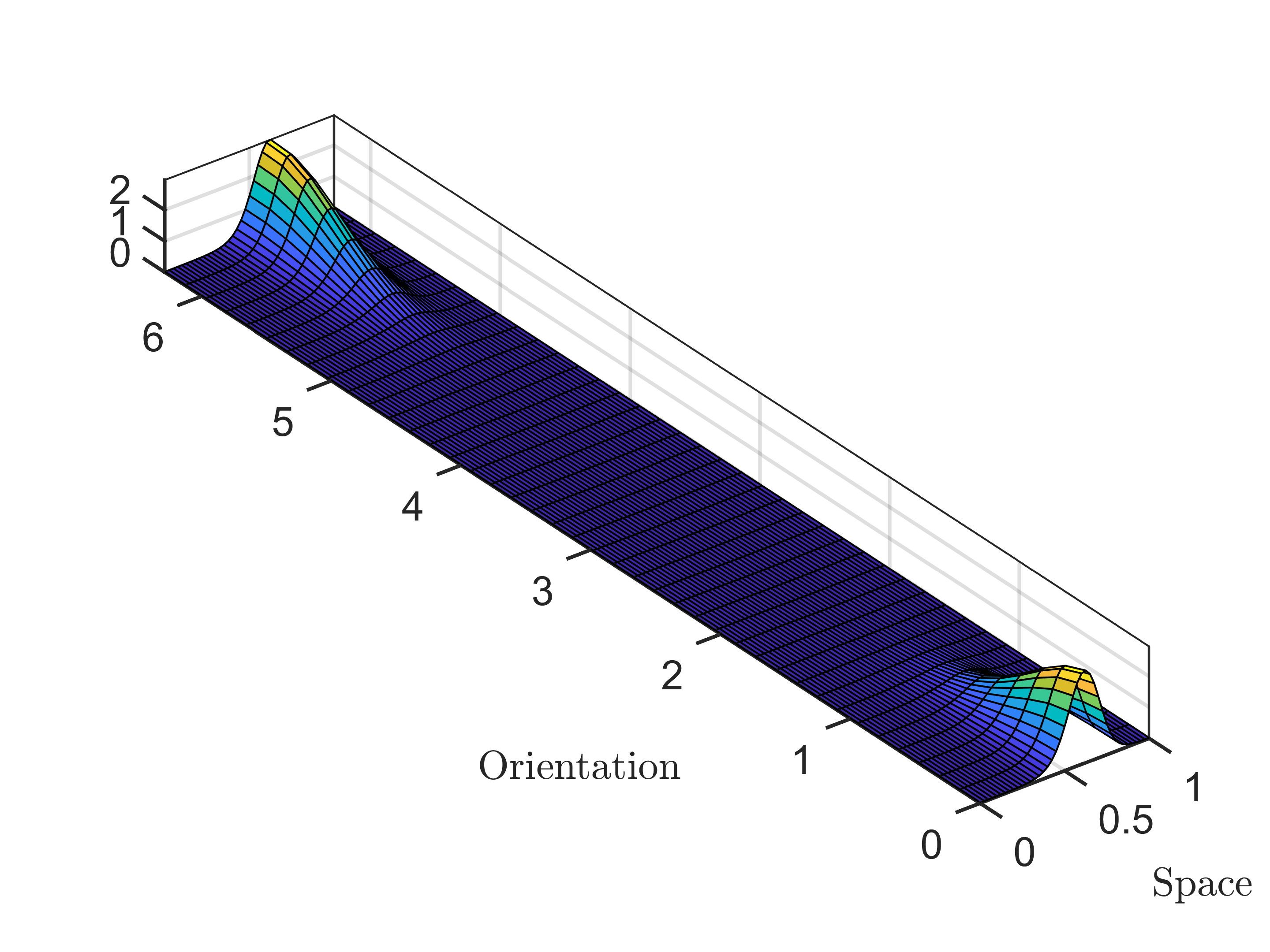}
        \caption{$\gamma = 6$}
    \end{subfigure}
    
    \caption{
    Steady-state density for the unnormalised interaction with $R=0.2$.
    The critical threshold is $\gamma_c=\Gamma/R=5$.
    }
    \label{fig:unnormalised_pair}
\end{figure}
\FloatBarrier

\begin{figure}[t]
  \centering
  \begin{subfigure}[t]{0.49\textwidth}
    \centering
    \includegraphics[width=\linewidth]{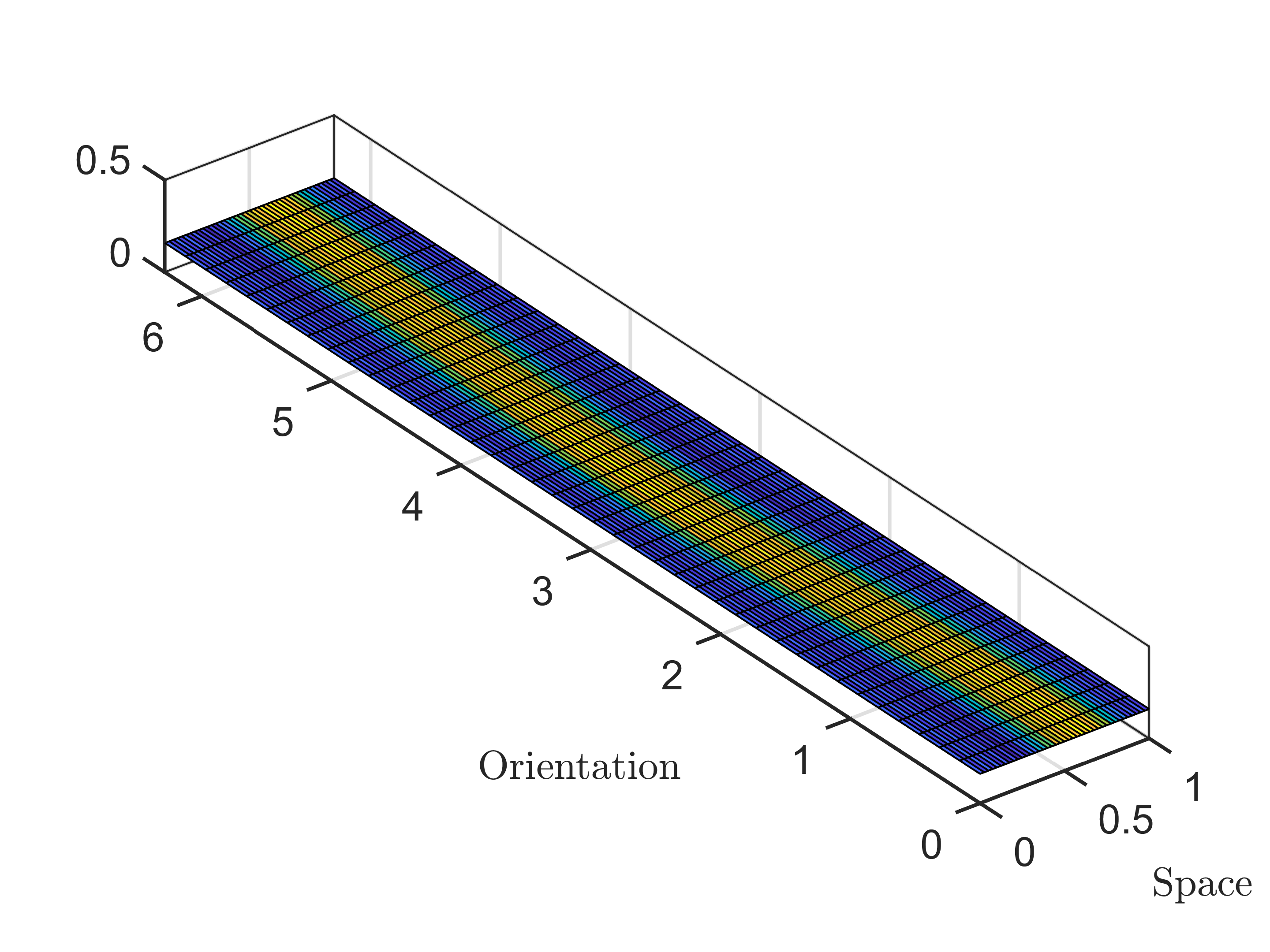}
    \caption{$\gamma = 1$}
  \end{subfigure}
  \hfill
  \begin{subfigure}[t]{0.49\textwidth}
    \centering
    \includegraphics[width=\linewidth]{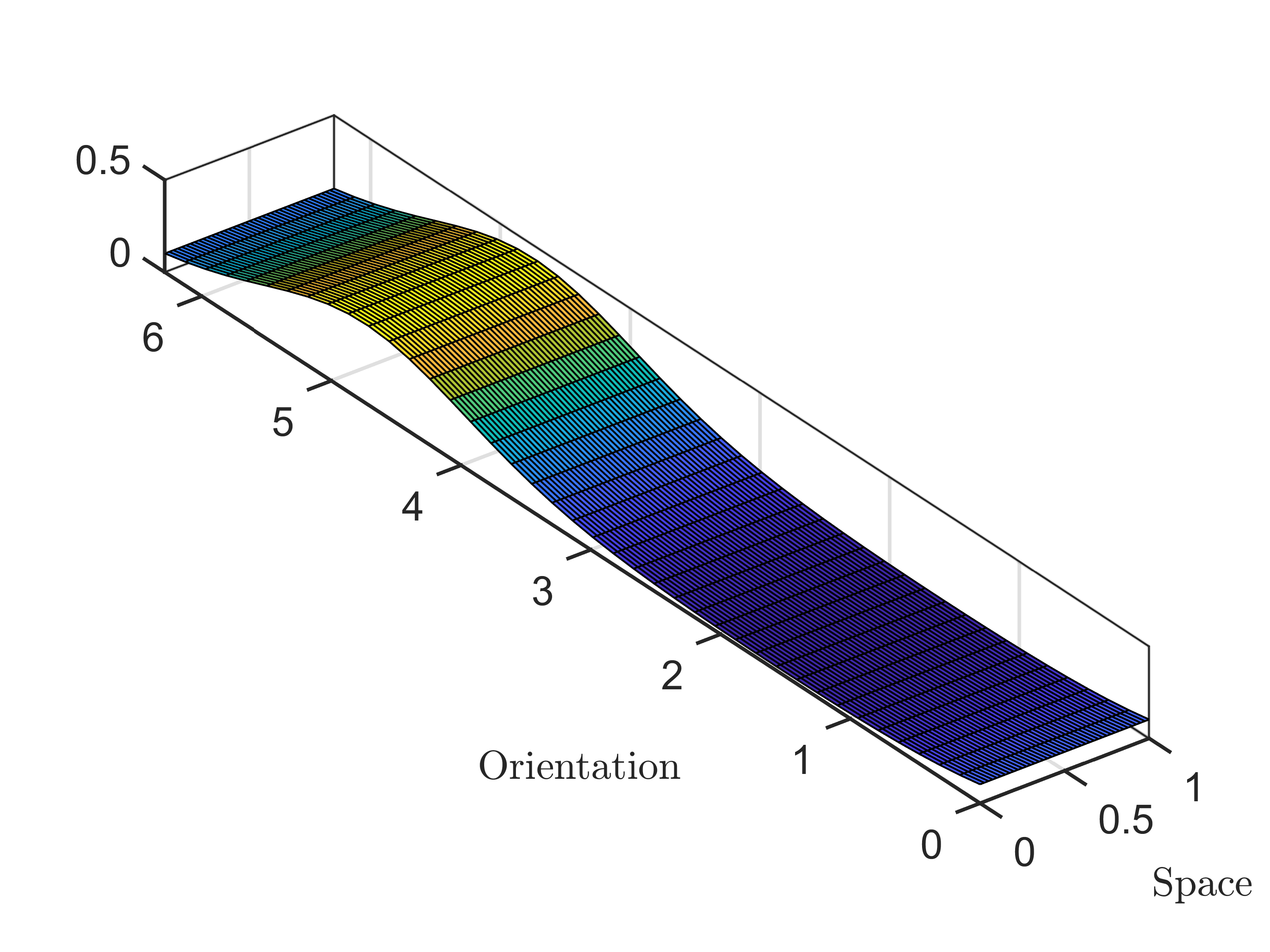}
    \caption{$\gamma = 2.5$}
  \end{subfigure}
  \caption{Steady-state density for the fully normalised interaction.
  The critical threshold is $\gamma_c = 2\Gamma = 2$.
  }
  \label{fig:norm}
\end{figure}

\begin{figure}[t]
  \centering
  \begin{subfigure}[t]{0.49\textwidth}
    \centering
    \includegraphics[width=\linewidth]{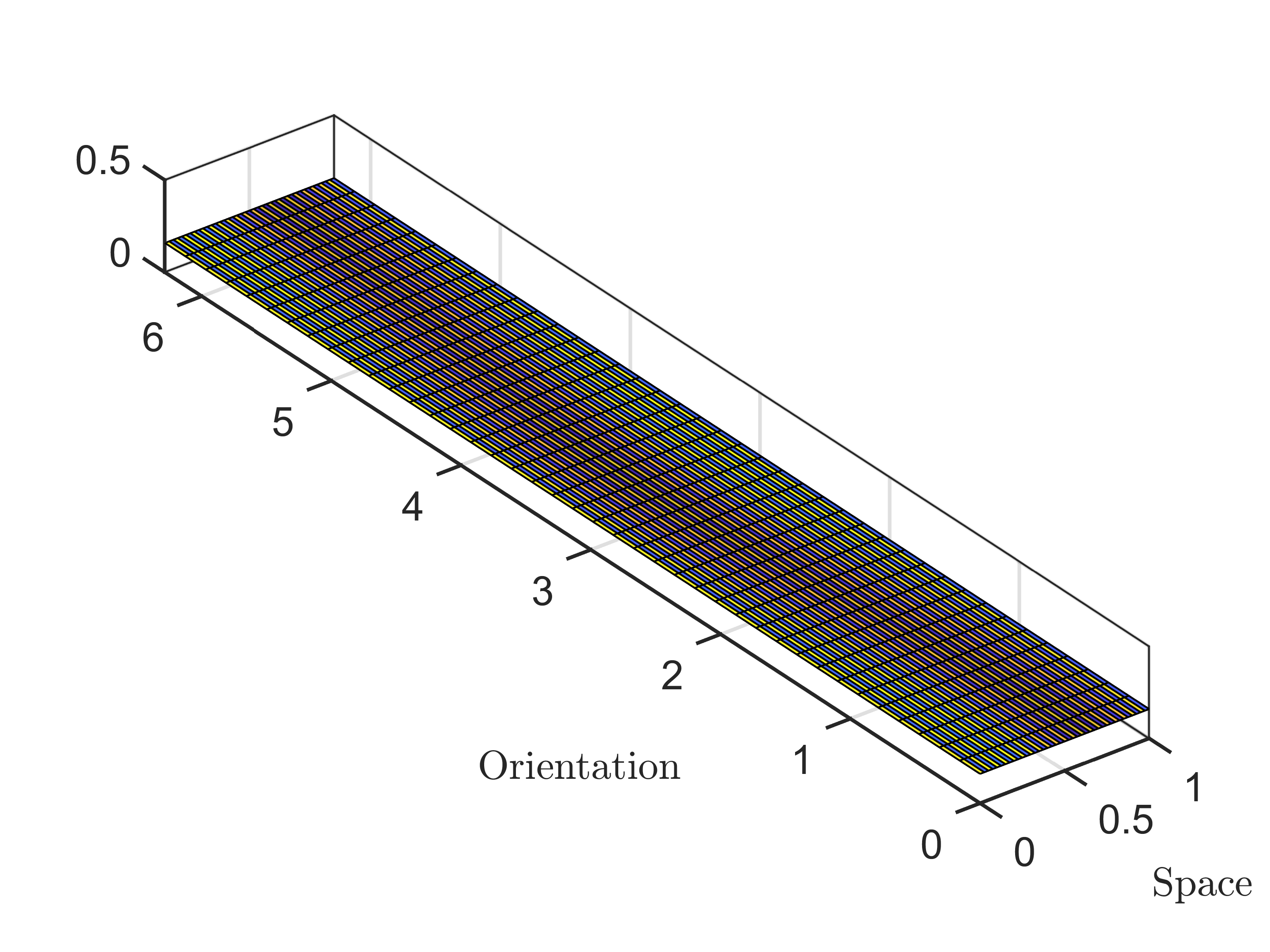}
    \caption{$\gamma = 1$}
  \end{subfigure}
  \hfill
  \begin{subfigure}[t]{0.49\textwidth}
    \centering
    \includegraphics[width=\linewidth]{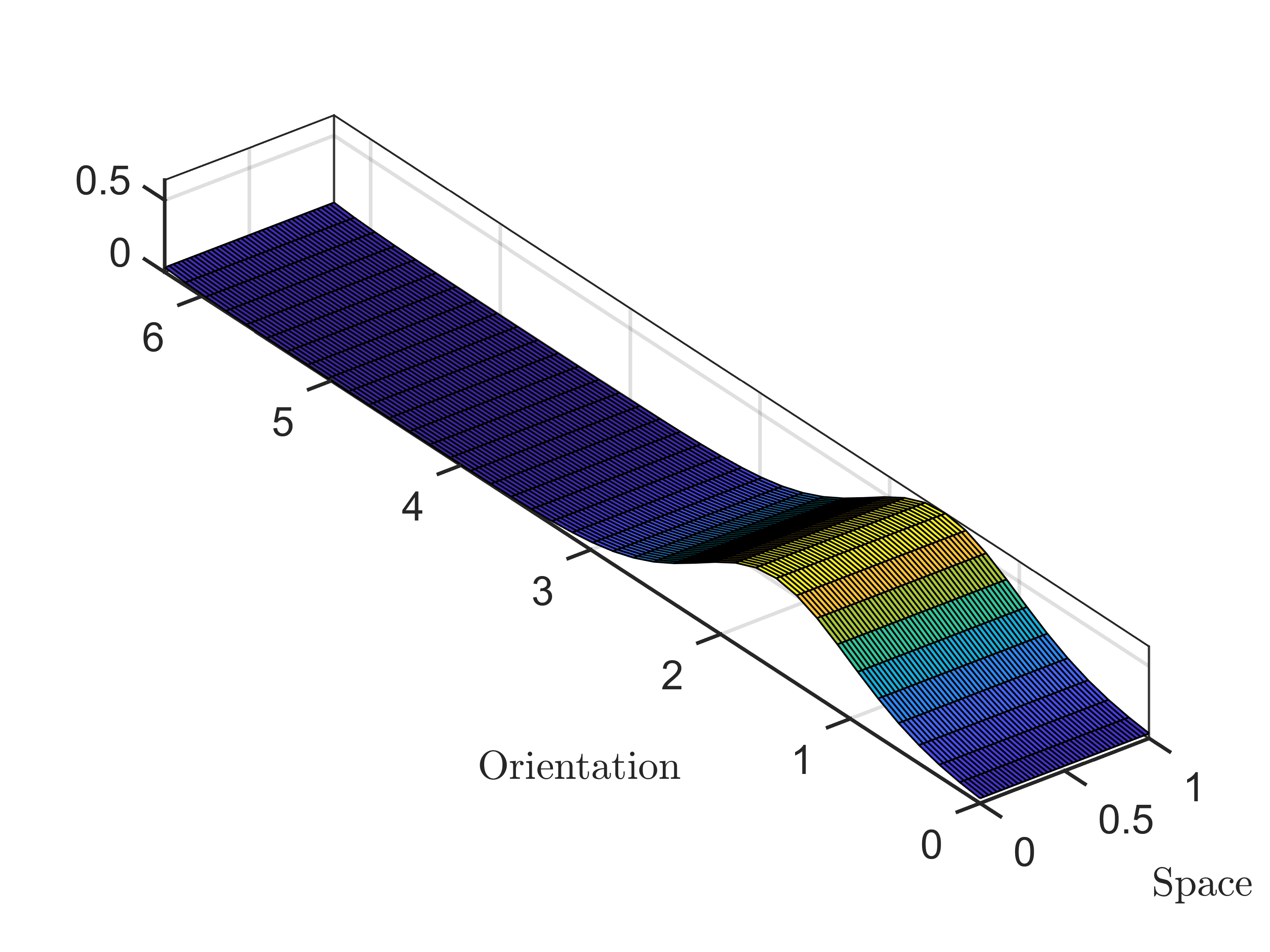}
    \caption{$\gamma = 6$}
  \end{subfigure}
  \caption{Steady-state density for the partially normalised (in $\theta$) interaction with $R = 0.3$.
  The critical threshold is $\gamma_c = \Gamma/R = 3.33$.
  }
  \label{fig:norm_theta}
\end{figure}

\begin{figure}[t]
  \centering
  \begin{subfigure}[t]{0.49\textwidth}
    \centering
    \includegraphics[width=\linewidth]{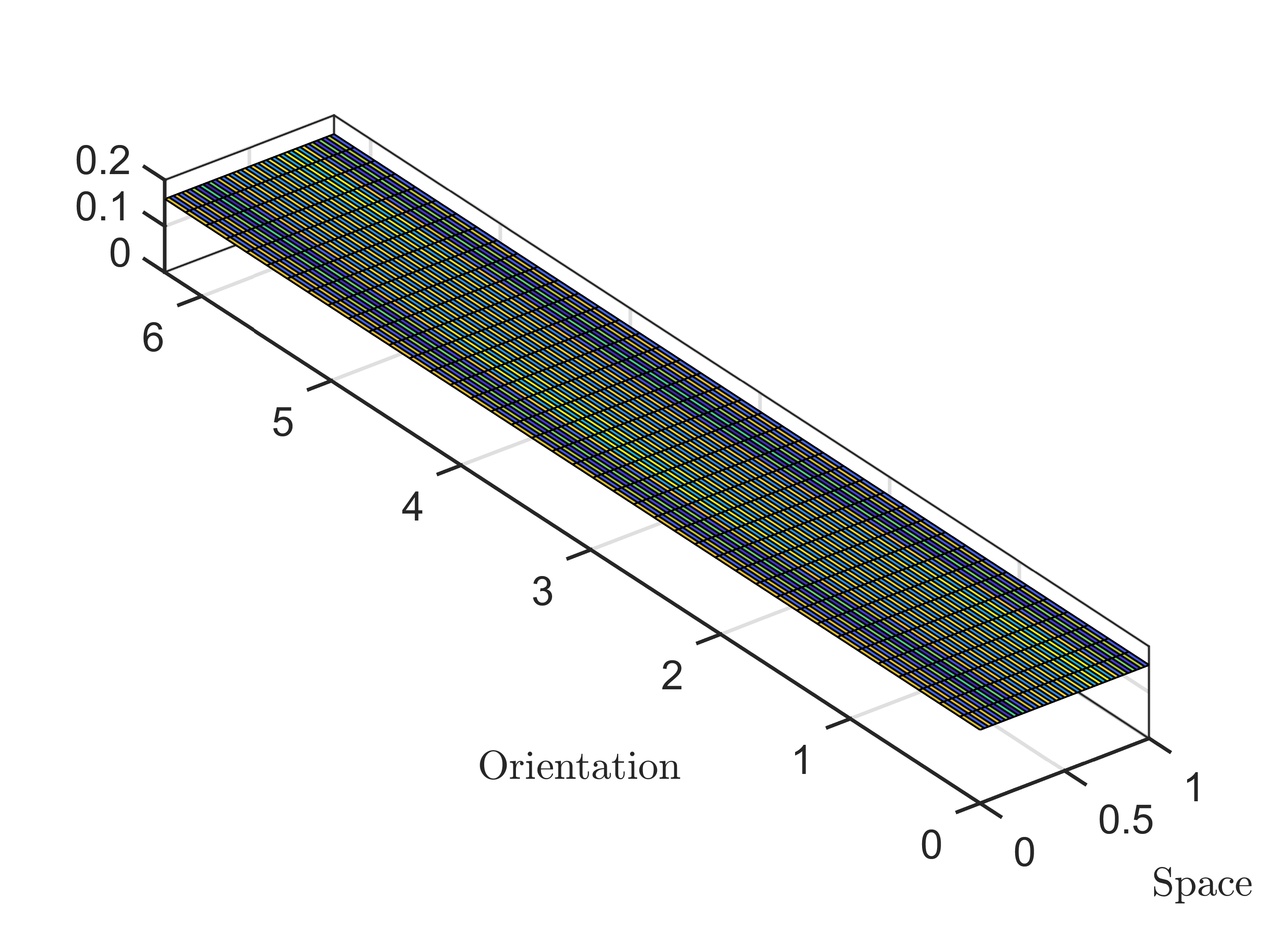}
    \caption{$\gamma = 0.2$}
  \end{subfigure}
  \hfill
  \begin{subfigure}[t]{0.49\textwidth}
    \centering
    \includegraphics[width=\linewidth]{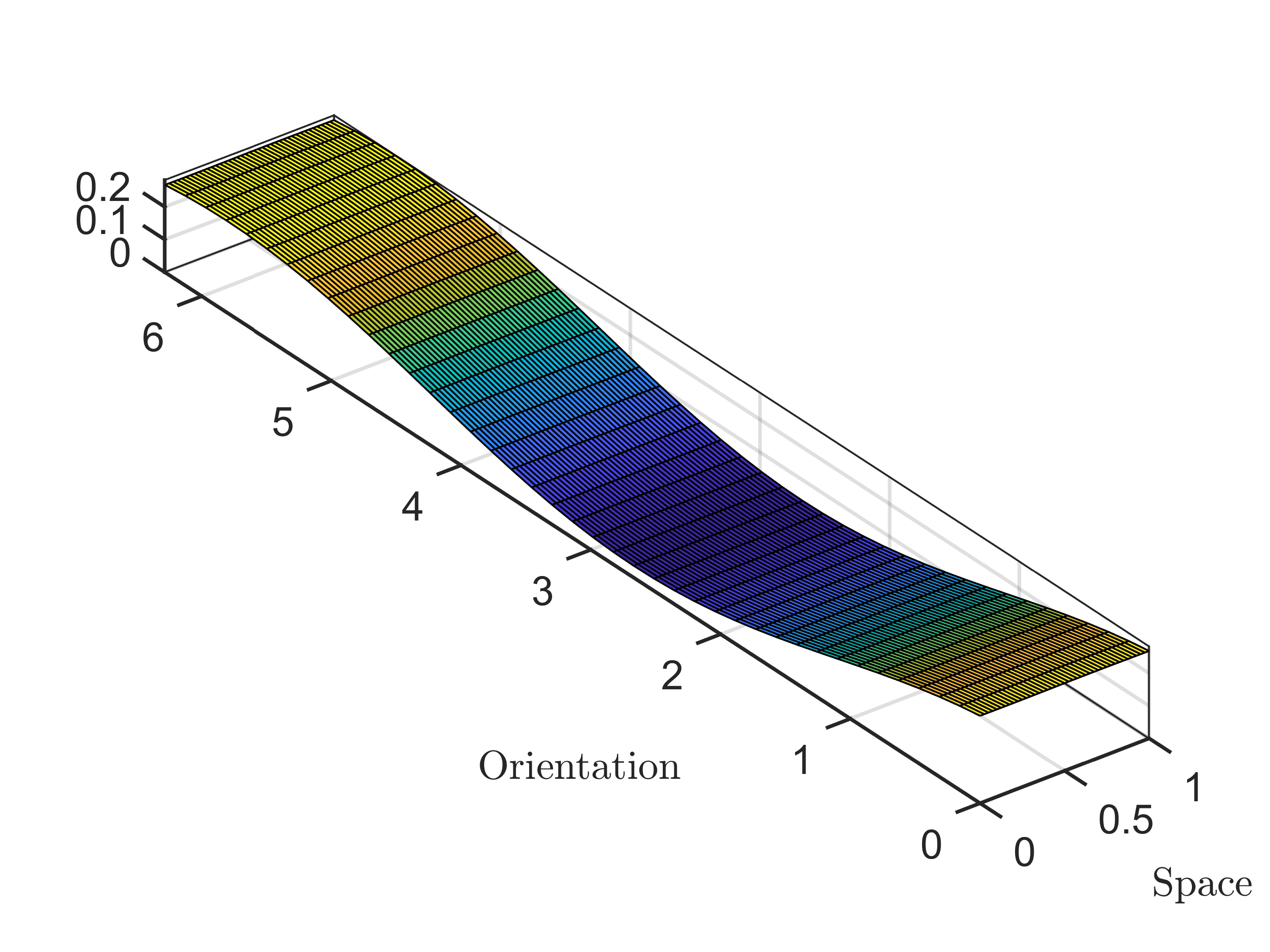}
    \caption{$\gamma = 0.38$}
  \end{subfigure}
  \caption{Steady-state density for the partially normalised (in $x$) interaction.
  The critical threshold is $\gamma_c = \Gamma/\pi \approx 0.32$.
  }
  \label{fig:x-norm}
\end{figure}

\section{Stability analysis for $h \neq 0$}\label{sec:h-neq-0}

When $h \neq 0$, the uniform density $\rho_0 = \frac{1}{2\pi}$ is no longer a stationary solution of the Fokker-Planck equation, because the confining field $-h\sin(\theta)$ generates an angular current even in the absence of interactions. The stability problem must therefore be formulated around the corresponding $h$-dependent stationary state.

In this section we first construct the stationary branch $\rho_h = \rho_0 + h\rho_1 + O(h^2)$ perturbatively for small $h$, then linearise around it and apply eigenvalue perturbation theory in the sense of Kato~\cite{kato1966perturbation} to obtain an explicit formula for the critical coupling $\gamma_c(h)$. Throughout, we work in the spatially homogeneous setting and restrict attention to the fully normalised monochromatic case ($\kappa = 1$, $n = 1$, $a_1 = 1$). The extension to general normalisations is discussed in Remark~\ref{rem:general-kappa}.

The spatially homogeneous stationary Fokker-Planck equation takes the form
\begin{equation}\label{eq:stat-FP-homog}
    0 = -\partial_\theta\Big(\big[F - h\sin(\theta)
    - \gamma\big(\sin(\theta)\, r_c - \cos(\theta)\, r_s\big)\big]\,\rho\Big)
    + \Gamma\,\partial_\theta^2\rho,
\end{equation}
where $r_c = \int_0^{2\pi}\cos(\theta)\,\rho\dd\theta$ and $r_s = \int_0^{2\pi}\sin(\theta)\,\rho\dd\theta$ are the first-order Fourier moments of $\rho$. For $h = 0$ this reduces to the equation studied in Section~\ref{sec:h0}; the goal here is to track how its stationary solutions and their stability properties deform as $h$ is increased from zero.

\subsection{Construction of the stationary branch}\label{subsec:rho-h}
We look for a stationary solution of \eqref{eq:stat-FP-homog} in the form
\begin{equation}\label{eq:rho-h-expansion}
    \rho_h(\theta) = \rho_0 + h\,\rho_1(\theta) + O(h^2), \qquad
    \rho_0 = \frac{1}{2\pi},
\end{equation}
where normalisation requires $\int_0^{2\pi}\rho_1(\theta)\dd\theta = 0$. Since the first Fourier modes of $\rho_0$ vanish, the moments of $\rho_h$ expand as
\[
    r_{c,h} = h\,r_{c,1} + O(h^2), \qquad r_{s,h} = h\,r_{s,1} + O(h^2),
\]
where $r_{c,1} = \int_0^{2\pi}\cos(\theta)\,\rho_1\dd\theta$ and $r_{s,1} = \int_0^{2\pi}\sin(\theta)\,\rho_1\dd\theta$.

Inserting \eqref{eq:rho-h-expansion} into \eqref{eq:stat-FP-homog} and collecting terms at order $O(h)$ gives
\begin{equation}\label{eq:rho1-ODE-moments}
    \Gamma\,\partial_\theta^2 \rho_1
    - F\,\partial_\theta \rho_1
    + \frac{\gamma}{2\pi}\Big(r_{c,1}\cos(\theta) + r_{s,1}\sin(\theta)\Big)
    + \frac{1}{2\pi}\cos(\theta) = 0.
\end{equation}
Here the term involving $r_{c,1}$ and $r_{s,1}$ arises from the linearisation of the interaction, and the forcing $\frac{1}{2\pi}\cos(\theta)$ is the $O(h)$ contribution of the confining field acting on $\rho_0$, via $-\partial_\theta(\frac{1}{2\pi}\cdot(-\sin(\theta))) = \frac{1}{2\pi}\cos(\theta)$.

Since \eqref{eq:rho1-ODE-moments} involves only first harmonics, we set $\rho_1 = A\cos(\theta) + B\sin(\theta)$, giving $r_{c,1} = \pi A$ and $r_{s,1} = \pi B$, so \eqref{eq:rho1-ODE-moments} becomes
\begin{equation}\label{eq:rho1-ODE}
    \Gamma\,\partial_\theta^2\rho_1
    - F\,\partial_\theta\rho_1
    + \frac{\gamma}{2}\rho_1
    + \frac{1}{2\pi}\cos(\theta) = 0.
\end{equation}
Substituting $\rho_1(\theta)$ into \eqref{eq:rho1-ODE} and matching coefficients of $\cos(\theta)$ and $\sin(\theta)$ gives the following unique values for $A$ and $B$
\begin{equation*}\label{eq:AB-coeffs}
    A = -\frac{1}{2\pi}\,\frac{\alpha}{\alpha^2 + F^2}, \qquad
    B = \frac{1}{2\pi}\,\frac{F}{\alpha^2 + F^2},
\end{equation*}
where $\alpha = \frac{\gamma}{2} - \Gamma$. Therefore
\begin{equation}\label{eq:rho1-explicit}
    \rho_1(\theta) = -\frac{1}{2\pi}\,\frac{\alpha}{\alpha^2 + F^2}\,\cos(\theta)
    + \frac{1}{2\pi}\,\frac{F}{\alpha^2 + F^2}\,\sin(\theta).
\end{equation}
For later use we record the complex Fourier representation $\rho_1(\theta) = c_1\,e^{i\theta} + c_{-1}\,e^{-i\theta}$, where
\begin{equation}\label{eq:c-pm1}
    c_{\pm 1} = \frac{A \mp iB}{2} = -\frac{1}{4\pi(\alpha \mp iF)}.
\end{equation}

The perturbative expansion \eqref{eq:rho-h-expansion} is valid as long as $h\|\rho_1\|$ remains small. For the stability analysis below we require $h^2\|\rho_1\|^2 \ll 1$, which holds provided $h^2 \ll \alpha^2 + F^2$. In particular, for any fixed $F > 0$ the expansion is valid for all sufficiently small $h$, uniformly in $\alpha$ near $0$.

\begin{rem}[General normalisation]\label{rem:general-kappa}
For a normalisation with spatial kernel constant $\kappa_0 = \int_{\R^2}\cKr(\by)\dd\by$, the same calculation applies with $\alpha$ replaced by $\alpha_0 := \frac{\gamma\kappa_0}{2} - \Gamma$. The formula \eqref{eq:rho1-explicit} holds with $\alpha_0$ in place of $\alpha$. We use $\alpha$ (i.e.\ $\kappa_0 = 1$) throughout this section for notational simplicity.
\end{rem}

\subsection{Linearisation around $\rho_h$}\label{subsec:linearisation-h}
We now linearise the Fokker-Planck equation around the stationary branch $\rho_h$ constructed above. The central question is how the critical coupling $\gamma_c$ shifts as $h$ increases from zero, and the answer is obtained by tracking the leading eigenvalue of the linearised operator as it deforms under the perturbation. The strategy is as follows: we decompose the linearised operator as $L_h = L_0 + hL_1 + O(h^2)$, where $L_0$ is the operator already diagonalised in Section~\ref{sec:h0}, and apply Kato's analytic perturbation theory~\cite{kato1966perturbation} to follow the critical eigenvalue of $L_0$ order by order in $h$. We will find that the first-order correction $\lambda_1$ vanishes identically — a structural cancellation rooted in the Fourier support of the perturbation — so the leading effect of the confining field on the threshold appears only at order $h^2$, through an explicit second-order correction $\lambda_2$ that depends on both $F$ and $\Gamma$.

Writing $\rho(\theta,t) = \rho_h(\theta) + \eta(\theta,t)$ with $\int_0^{2\pi}\eta\dd\theta = 0$, the evolution of the perturbation $\eta$ is governed by $\partial_t \eta = L_h \eta$, where
\begin{equation}\label{eq:Lh-operator}
    L_h\eta = -\partial_\theta\Big(\big[F - h\sin(\theta)
    - \gamma I[\rho_h]\big]\,\eta
    - \gamma\rho_h\, I[\eta]\Big) + \Gamma\,\partial_\theta^2\eta.
\end{equation}
Using the expansions $\rho_h = \rho_0 + h\rho_1 + O(h^2)$ and $I[\rho_h] = hI[\rho_1] + O(h^2)$, we decompose $L_h = L_0 + hL_1 + O(h^2)$, with
\begin{align}
    L_0\eta &= -F\,\partial_\theta\eta + \Gamma\,\partial_\theta^2\eta
    + \frac{\gamma}{2\pi}\,\partial_\theta I[\eta], \label{eq:L0-def}\\
    L_1\eta &= \partial_\theta\Big(
    \big[\sin(\theta) + \gamma I[\rho_1](\theta)\big]\,\eta
    + \gamma\rho_1(\theta)\, I[\eta]\Big). \label{eq:L1-def}
\end{align}
Here $L_0$ is the linearisation around $\rho_0$ at $h = 0$, already analysed in Section~\ref{sec:h0}, and $L_1$ captures the first-order correction induced by the confining field and the deformation of the stationary state.

The interaction operator $I[\cdot]$ acts on the complex Fourier basis $e_n(\theta) = e^{in\theta}$ as
\begin{equation*}\label{eq:I-on-en}
    I[e_1] = -i\pi\, e_1, \qquad
    I[e_{-1}] = i\pi\, e_{-1}, \qquad
    I[e_n] = 0 \quad (|n| \neq 1),
\end{equation*}
so $L_0$ is diagonal on the Fourier basis with eigenvalues
\begin{equation*}\label{eq:L0-spectrum}
    \mu_n = \begin{cases}
        \alpha - iF, & n = 1, \\
        \alpha + iF, & n = -1, \\
        -n^2\Gamma - inF, & |n| \geq 2.
    \end{cases}
\end{equation*}
The critical eigenvalues are $\mu_{\pm 1} = \alpha \mp iF$, whose common real part $\alpha$ changes sign at $\gamma = 2\Gamma$; all other eigenvalues satisfy $\mathrm{Re}\,\mu_n = -n^2\Gamma < 0$ and are therefore always stable.

We focus on the $n = +1$ branch; the $n = -1$ branch is obtained by complex conjugation. We apply the analytic perturbation theory of Kato~\cite{kato1966perturbation} to the simple eigenvalue $\lambda_0 = \mu_1 = \alpha - iF$ of $L_0$, expanding
\[
    \lambda(h) = \lambda_0 + h\lambda_1 + h^2\lambda_2 + O(h^3), \qquad
    \phi(h) = e_1 + h\phi_1 + h^2\phi_2 + O(h^3).
\]
Since $L_0$ is diagonal, its left and right eigenvectors coincide, and the first-order correction is given by
\[
    \lambda_1 = \frac{\langle e_1, L_1 e_1 \rangle}{\langle e_1, e_1 \rangle}.
\]

\begin{lem}\label{lem:lambda1-vanishes}
    $\lambda_1 = 0$.
\end{lem}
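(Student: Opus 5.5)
The plan is to compute $L_1 e_1$ explicitly in the Fourier basis and observe that it has no component along $e_1$. The structural reason is that every multiplicative coefficient in $L_1$ ($\sin\theta$, $I[\rho_1]$, $\rho_1$) is supported on the harmonics $\pm1$. Multiplying $e_1$ by such a function therefore shifts it to the harmonics $0$ and $2$ only. The outer $\partial_\theta$ is diagonal and cannot move mass back to mode $1$. Hence $\langle e_1, L_1 e_1\rangle = 0$ with the standard $L^2(0,2\pi)$ inner product.

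\textbf{Step 1: the term $\gamma I[\rho_1]\,e_1$.} From \eqref{eq:c-pm1} we have $\rho_1 = c_1 e_1 + c_{-1}e_{-1}$, so by the action of $I$ on the Fourier basis,
\[
    I[\rho_1] = -i\pi c_1 e_1 + i\pi c_{-1}e_{-1}.
\]
Also $\sin\theta = (e_1 - e_{-1})/(2i)$. Hence $\big[\sin\theta + \gamma I[\rho_1]\big]e_1$ is a linear combination of $e_2$ and $e_0$ only.

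\textbf{Step 2: the term $\gamma\rho_1 I[e_1]$.} Here $I[e_1] = -i\pi e_1$, so $\gamma\rho_1 I[e_1] = -i\pi\gamma(c_1 e_2 + c_{-1}e_0)$, again supported on modes $0$ and $2$.

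\textbf{Step 3: conclude.} Applying $\partial_\theta$ multiplies $e_n$ by $in$. This annihilates the $e_0$ part and keeps the $e_2$ part, so $L_1 e_1 \in \mathrm{span}\{e_2\}$. Orthogonality of $e_2$ and $e_1$ then gives $\lambda_1 = \langle e_1, L_1e_1\rangle/\langle e_1,e_1\rangle = 0$.

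The main point requiring care is justifying the first-order Kato formula with $e_1$ as both left and right eigenvector. This needs $\lambda_0 = \alpha - iF$ to be a simple isolated eigenvalue of $L_0$, which holds because $L_0$ is diagonal in the Fourier basis with distinct eigenvalues $\mu_n$ for $F>0$. The same computation with $n=-1$, or complex conjugation, gives the analogous statement for the conjugate branch.
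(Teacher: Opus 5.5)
Your proposal is correct and follows the paper's argument essentially verbatim. All multipliers in $L_1$ ($\sin\theta$, $I[\rho_1]$, $\rho_1$) are supported on harmonics $\pm1$, so $L_1e_1$ lies in modes $0$ and $2$; the outer $\partial_\theta$ removes mode $0$, and orthogonality of $e_2$ to $e_1$ gives $\lambda_1=0$. Your added check that $\lambda_0=\alpha-iF$ is a simple isolated eigenvalue for $F>0$, which justifies using $e_1$ as both left and right eigenvector since $L_0$ is diagonal, is a point the paper leaves implicit.
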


\begin{proof}
The operator $L_1$ defined in \eqref{eq:L1-def} consists of two terms, each of the form $\partial_\theta(\cdot)$, where the argument involves multiplying $e_1$ by first-harmonic functions: $\sin(\theta)$, $I[\rho_1](\theta)$, and $\rho_1(\theta)$. Each of these multipliers has Fourier support on modes $n = \pm 1$, so multiplying by $e_1$ produces only modes $e_0$ and $e_2$. The outer derivative $\partial_\theta$ annihilates the $e_0$ component, so $L_1 e_1 \in \mathrm{span}\{e_2\}$, which is orthogonal to $e_1$, giving $\lambda_1 = 0$.
\end{proof}
Since $\lambda_1 = 0$, the leading correction to the eigenvalue is at second order. By the Kato expansion~\cite{kato1966perturbation}, this is given by
\begin{equation}\label{eq:lambda2-general}
    \lambda_2 = \frac{\langle e_1, L_1\phi_1\rangle}{\langle e_1, e_1\rangle},
\end{equation}
where $\phi_1$ is determined by the first-order equation
\[
    (L_0 - \lambda_0)\phi_1 = -L_1 e_1.
\]

To evaluate \eqref{eq:lambda2-general}, we proceed in three steps. First, we compute $L_1e_1$. Since $\rho_1$, $I[\rho_1]$, and $\sin(\theta)$ contain only the Fourier modes $\pm1$, the operator $L_1$ maps $e_n$ into a combination of $e_{n-1}$ and $e_{n+1}$. In particular, $L_1e_1$ is proportional to $e_2$, so the first-order correction $\phi_1$ is also proportional to $e_2$. Second, we solve for $\phi_1$ using
\[
    (L_0-\lambda_0)\phi_1=-L_1e_1,
\]
which is immediate because $L_0$ is diagonal. Third, we substitute the resulting expression for $\phi_1$ into \eqref{eq:lambda2-general}; this shows that only the $e_1$ component of $L_1e_2$ contributes to $\lambda_2$. We therefore need only two scalar coefficients: the coefficient of $e_2$ in $L_1e_1$, and the coefficient of $e_1$ in $L_1e_2$.

Accordingly, define
\[
    a := \frac{\langle e_2, L_1 e_1\rangle}{\langle e_2, e_2\rangle}.
\]
Then $\phi_1 = -\frac{a}{\mu_2 - \lambda_0}\,e_2$, where $\mu_2 = -4\Gamma - 2iF$ is the eigenvalue of $L_0$ on $e_2$. Next define
\[
    b := \frac{\langle e_1, L_1 e_2\rangle}{\langle e_1, e_1\rangle}.
\]
Substituting the expression for $\phi_1$ into \eqref{eq:lambda2-general} gives
\begin{equation}\label{eq:lambda2-ab}
    \lambda_2 = -\frac{ab}{\mu_2 - \lambda_0}.
\end{equation}

It remains to compute $a$ and $b$ explicitly. Define $q^{(1)}(\theta) := -\sin(\theta) - \gamma I[\rho_1](\theta)$, so that $L_1\eta = -\partial_\theta(q^{(1)}\eta) + \gamma\partial_\theta(\rho_1\, I[\eta])$, and let $q_{\pm 1}$ denote the Fourier coefficients of $q^{(1)}$ on $e_{\pm 1}$. Applying $L_1$ to $e_1$ and using $I[e_1] = -i\pi\, e_1$ gives $L_1 e_1 = \big(-2iq_1 + 2\gamma\pi c_1\big)\,e_2$, so that $a = -2iq_1 + 2\gamma\pi c_1$. Substituting the explicit Fourier coefficients $q_1$ and $c_1$ from \eqref{eq:c-pm1} and using $\gamma = 2(\alpha + \Gamma)$, we obtain
\begin{equation*}\label{eq:coeff-a}
    a = \frac{F - i(2\Gamma + \alpha)}{F + i\alpha}.
\end{equation*}
Applying $L_1$ to $e_2$ and using $I[e_2] = 0$ gives $L_1 e_2 = -\partial_\theta(q^{(1)} e_2)$. Since $q^{(1)} e_2 = q_1\,e_3 + q_{-1}\,e_1$, differentiation gives the $e_1$-component $-iq_{-1}$, so $b = -iq_{-1}$, giving
\begin{equation}\label{eq:coeff-b}
    b = -\frac{F + i\Gamma}{2(F - i\alpha)}.
\end{equation}
Combining \eqref{eq:lambda2-ab}--\eqref{eq:coeff-b} with $\mu_2 - \lambda_0 = -(4\Gamma + \alpha + iF)$, we arrive at the second-order eigenvalue correction
\begin{equation}\label{eq:lambda2-formula}
    \lambda_2 = -\frac{ab}{\mu_2 - \lambda_0}
    = \frac{1}{4\Gamma + \alpha + iF}
    \left(-\frac{F - i(2\Gamma + \alpha)}{F + i\alpha}\right)
    \left(\frac{F + i\Gamma}{2(F - i\alpha)}\right).
\end{equation}
\subsection{The critical coupling $\gamma_c(h)$}\label{subsec:gamma-c}

The onset of instability is determined by the condition $\mathrm{Re}\,\lambda(h) = 0$. Since $\mathrm{Re}\,\lambda_0 = \alpha$ and $\lambda_1 = 0$, expanding to second order gives
\[
    \alpha + h^2\,\mathrm{Re}(\lambda_2) = 0.
\]
Writing $\gamma_c(h) = 2\Gamma + h^2\delta\gamma + O(h^4)$, so that $\alpha = h^2\delta\gamma/2 + O(h^4)$ at the critical point, and evaluating $\mathrm{Re}(\lambda_2)$ at the unperturbed threshold $\alpha = 0$, the criticality condition gives $\delta\gamma = -2\,\mathrm{Re}(\lambda_2)|_{\alpha=0}$.

At $\alpha = 0$, formula \eqref{eq:lambda2-formula} simplifies to
\[
    \lambda_2\big|_{\alpha=0}
    = \frac{1}{4\Gamma + iF}
    \cdot\left(-\frac{F - 2i\Gamma}{F}\right)
    \cdot\frac{F + i\Gamma}{2F}.
\]
Expanding the product of the last two factors gives numerator $-(F - 2i\Gamma)(F + i\Gamma) = -(F^2 + 2\Gamma^2) + iF\Gamma$ over denominator $2F^2$, so
\[
    \lambda_2\big|_{\alpha=0}
    = \frac{-(F^2 + 2\Gamma^2) + iF\Gamma}{2F^2(4\Gamma + iF)}.
\]
Multiplying through by $(4\Gamma - iF)$ and extracting the real part gives
\begin{equation}\label{eq:Re-lambda2}
    \mathrm{Re}(\lambda_2)\big|_{\alpha=0}
    = -\frac{\Gamma(3F^2 + 8\Gamma^2)}{2F^2(16\Gamma^2 + F^2)}.
\end{equation}
Since $\delta\gamma = -2\,\mathrm{Re}(\lambda_2)|_{\alpha=0}$, we arrive at the main result of this section.

\begin{prop}[Critical coupling with confinement]\label{thm:gamma-c}
In the spatially homogeneous regime with fully normalised interaction, the critical alignment strength satisfies
\begin{equation}\label{eq:gamma-c-formula}
    \gamma_c(h) = 2\Gamma
    + \frac{h^2\,\Gamma\,(3F^2 + 8\Gamma^2)}{F^2(16\Gamma^2 + F^2)}
    + O(h^4).
\end{equation}
\end{prop}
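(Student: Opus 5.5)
The proof assembles the perturbative computations already carried out in Section~\ref{sec:h-neq-0} into a statement about the zero set of $\mathrm{Re}\,\lambda(h,\gamma)$. First, justify Kato's analytic perturbation theory. The eigenvalue $\lambda_0=\mu_1=\alpha-iF$ of $L_0$ is simple and isolated: it is separated from $\mu_{-1}=\alpha+iF$ by $2F>0$, and from $\mu_n$, $|n|\ge2$, by a gap of at least $3\Gamma$ in real part. The operator family $h\mapsto L_h$ is analytic in $h$, since $\rho_h$ depends analytically on $h$ via the self-consistency equation~\eqref{eq:self-consist-homog}, at least for $\alpha^2+F^2$ bounded away from zero, which holds near $\alpha=0$ because $F>0$. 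The perturbation $L_1$ is relatively bounded with respect to $\Gamma\partial_\theta^2$, being first order. Hence the critical branch $\lambda(h,\gamma)$ is analytic in $(h,\gamma)$ near $(0,2\Gamma)$.

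Second, expand the branch. Lemma~\ref{lem:lambda1-vanishes} gives $\lambda_1=0$. The Kato formula \eqref{eq:lambda2-general} together with the reduction \eqref{eq:lambda2-ab} give $\lambda_2$ explicitly as \eqref{eq:lambda2-formula}. One must also account for the $O(h^2)$ part of $L_h$, namely $L_2$, which comes from $\rho_2$ and from the $h^2$ term of $I[\rho_h]$. Its contribution to $\lambda_2$ is $\langle e_1,L_2e_1\rangle$. I would check that this vanishes by the same Fourier-support argument. The $O(h^2)$ correction $\rho_2$ is driven by products of first harmonics, so it lives on modes $0$ and $\pm2$, and the mode-$0$ part vanishes by normalisation. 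Multiplying $e_1$ by mode $\pm2$ coefficients gives modes $3,-1$, which are orthogonal to $e_1$, and $I[\rho_2]=0$. The remaining term $\rho_2 I[e_1]$ likewise has no $e_1$ component. This is the step I expect to be the main obstacle, because the paper's decomposition silently absorbs it into $O(h^2)$. A parity argument shows $\lambda$ is even in $h$: the map $\theta\mapsto\theta+\pi$ sends $h\mapsto -h$ and fixes the mode $e_1$ up to sign. Therefore the remainder is $O(h^4)$ rather than $O(h^3)$.

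Third, apply the implicit function theorem to $G(h^2,\gamma):=\mathrm{Re}\,\lambda=\alpha+h^2\mathrm{Re}\,\lambda_2(\gamma)+O(h^4)$. Since $\partial_\gamma G|_{h=0}=1/2\neq0$, there is a unique analytic solution $\gamma_c(h)=2\Gamma+h^2\delta\gamma+O(h^4)$ with $\delta\gamma=-2\,\mathrm{Re}\,\lambda_2|_{\alpha=0}$. Replacing $\mathrm{Re}\,\lambda_2(\gamma)$ by its value at $\alpha=0$ costs only $O(h^4)$, because $\alpha=O(h^2)$ on the critical curve. Inserting \eqref{eq:Re-lambda2} yields \eqref{eq:gamma-c-formula}. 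By complex conjugation, the $n=-1$ branch gives the same real part. All other eigenvalues remain in $\mathrm{Re}<-3\Gamma+O(h)$ by upper semicontinuity of the spectrum, so $\gamma_c(h)$ is indeed the instability threshold.

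Finally, I would double-check the algebra leading to \eqref{eq:Re-lambda2}. At $\alpha=0$, multiply the numerator $-(F^2+2\Gamma^2)+iF\Gamma$ by $(4\Gamma-iF)$. The real part is $-4\Gamma(F^2+2\Gamma^2)+F^2\Gamma=-\Gamma(3F^2+8\Gamma^2)$. Dividing by $2F^2(16\Gamma^2+F^2)$ gives \eqref{eq:Re-lambda2}, and hence the coefficient in \eqref{eq:gamma-c-formula}.
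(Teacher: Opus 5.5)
Your proposal is correct and follows the paper's route: Kato perturbation of the simple eigenvalue $\alpha - iF$, $\lambda_1 = 0$ by the Fourier-support argument, $\lambda_2 = -ab/(\mu_2-\lambda_0)$ through the single intermediate mode $e_2$, and the criticality condition evaluated at $\alpha = 0$. Your additions correctly fill gaps the paper leaves implicit: $\langle e_1, L_2 e_1\rangle = 0$ because $\rho_2$ lives only on modes $\pm 2$ (so $I[\rho_2]=0$), the $\theta\mapsto\theta+\pi$ conjugation giving $\lambda(-h)=\lambda(h)$ and hence the $O(h^4)$ remainder, and the implicit function theorem step.
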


The correction is positive, so the confining field raises the instability threshold. Only even powers of $h$ appear, as expected from the symmetry $h \mapsto -h$. For large $F$ the correction is of order $F^{-2}$, so the effect of confinement weakens in the fast-tilt regime; conversely, the coefficient diverges as $F \to 0$, reflecting the breakdown of the perturbation expansion in that limit. For general normalisations, replacing $\alpha = \frac{\gamma}{2} - \Gamma$ by $\alpha_0 = \frac{\gamma\kappa_0}{2} - \Gamma$ shifts the unperturbed threshold to $\gamma_c(0) = \frac{2\Gamma}{\kappa_0}$ while leaving the structure of the $O(h^2)$ correction unchanged.

\subsection{Numerical verification}\label{subsec:numerics-h}

We verify the perturbative predictions using a Fourier-Galerkin discretisation of the linearised operator $L_h$ about the perturbative steady state $\rho_h^{\mathrm{pert}}$. Perturbations are expanded in the angular Fourier basis $\eta(\theta)=\sum_{\substack{n=-N\\ n\neq0}}^{N}\eta_n e^{in\theta}$,
and the expansion is truncated at $|n|\le N$ (we use $N=32$). In this basis the operator $L_h$ is represented by a complex matrix $M_h\in\mathbb{C}^{(2N)\times(2N)}$.

The matrix entries are obtained directly from the Fourier representation of the linearised operator
\[
L_h\eta
=
-\partial_\theta\!\big(q_h\,\eta\big)
+\gamma\,\partial_\theta\!\big(\rho_h^{\mathrm{pert}} I[\eta]\big)
+\Gamma\,\partial_\theta^2\eta ,
\]
where
\[
q_h(\theta)=F-h\sin(\theta)-\gamma\,I[\rho_h^{\mathrm{pert}}](\theta),
\qquad
\rho_h^{\mathrm{pert}}(\theta)=\frac{1}{2\pi}+h\,\rho_1(\theta).
\]
Because both $q_h$ and $\rho_h^{\mathrm{pert}}$ contain only the Fourier modes $0,\pm1$, multiplication by these functions couples each Fourier mode only to its nearest neighbours. This structure gives a sparse banded Galerkin matrix.

For fixed parameters $(\Gamma,F,\gamma)$ we compute the eigenvalues of $M_h$ and track the spectral branch associated with the $n=+1$ mode. At $h=0$ this branch is identified as the eigenvalue closest to
\[
\lambda_0=\alpha-iF,
\qquad
\alpha=\gamma/2-\Gamma,
\]
and as $h$ increases we continue it by selecting the eigenvalue closest to the previously identified one in the complex plane.

The resulting numerical eigenvalue $\lambda^{\mathrm{num}}(h)$ is compared with the perturbative prediction
\[
\lambda^{\mathrm{pert}}(h)
=
\lambda_0+h^2\lambda_2+O(h^3),
\]
where $\lambda_2$ is obtained from the second-order perturbation calculation described in Section~\ref{subsec:linearisation-h}. Figure~\ref{fig:critical-branch} shows excellent agreement between the numerical eigenvalue branch and the perturbative prediction for small $h$. In particular, the real part exhibits the quadratic dependence on $h$ predicted by the theory.
\begin{figure}[H]
    \centering
    \includegraphics[width=\textwidth]{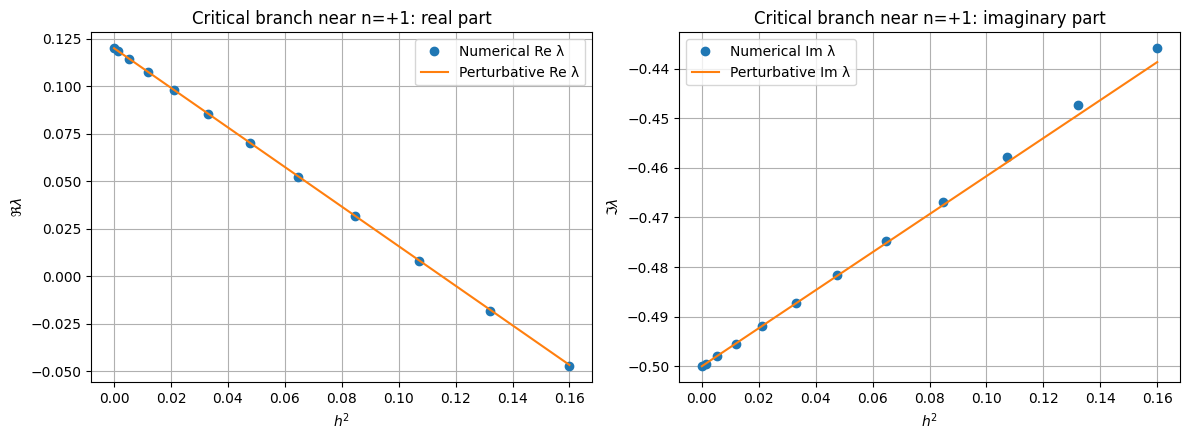}
    \caption{
    Real and imaginary parts of the eigenvalue branch associated with the $n=+1$ mode as a function of $h^2$ for $F=0.5$ and $\Gamma=1$. Markers denote numerical eigenvalues from the Fourier--Galerkin discretisation, while solid lines show the perturbative prediction $\lambda_+(h)=\lambda_0+h^2\lambda_2$.
}
    \label{fig:critical-branch}
\end{figure}

We also compute the critical coupling $\gamma_c(h)$ at which the real part of the $n=+1$ branch crosses zero. For each fixed value of $h$ we solve
$\mathrm{Re}\,\lambda_+(h,\gamma)=0$ numerically using a bisection method in $\gamma$, where $\lambda_+(h,\gamma)$ denotes the tracked continuation of the $n=+1$ eigenvalue. This gives the numerical threshold $\gamma_c^{\mathrm{num}}(h)$.

For comparison we also compute a perturbative prediction $\gamma_c^{\mathrm{pert}}(h)$ obtained by solving
\[
\mathrm{Re}\,\lambda^{\mathrm{pert}}_+(h,\gamma)=0,
\qquad
\lambda^{\mathrm{pert}}_+(h,\gamma)
=
\alpha-iF+h^2\lambda_2(\alpha),
\]
with $\alpha=\gamma/2-\Gamma$. This provides a perturbative approximation that incorporates the dependence of $\lambda_2$ on $\alpha$.

Figure~\ref{fig:gamma-c-h} compares the numerical and perturbative thresholds for $F=0.5$ and $\Gamma=1$, and shows excellent agreement for small $h$.
\begin{figure}[H]
    \centering
    \includegraphics[width=0.7\textwidth]{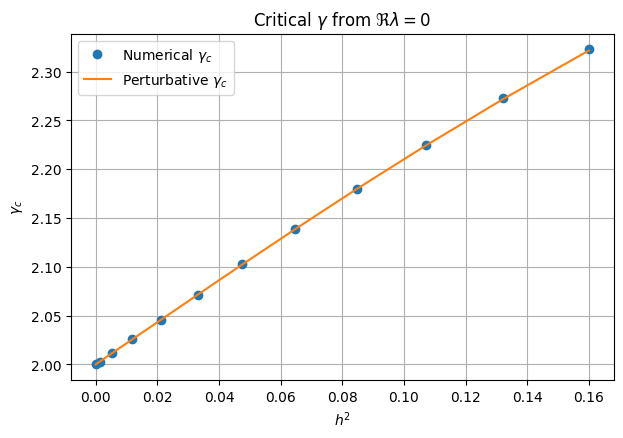}
    \caption{
    Critical coupling $\gamma_c(h)$ as a function of $h^2$ for $F=0.5$ and $\Gamma=1$. Circles show the numerical threshold obtained from $\mathrm{Re}\,\lambda_+(h,\gamma)=0$, while the solid curve shows the perturbative prediction based on $\lambda_+(h)=\lambda_0+h^2\lambda_2$.
    }
    \label{fig:gamma-c-h}
\end{figure}
\section{Spatial dependence for $h \neq 0$}\label{sec:spatial}

The analysis of Section~\ref{sec:h-neq-0} was carried out in the spatially homogeneous setting. We now reinstate the spatial variable $\bx \in \R^2$ and examine whether $\bk = 0$ remains the most unstable mode when $h \neq 0$.

The full linearised equation around $\rho_h(\theta)$ is $\partial_t\eta = L_h\eta$, with
\begin{equation*}\label{eq:Lh-spatial}
    L_h\eta = -v_0(\cos(\theta),\sin(\theta))\cdot\nabla_\bx\eta
    - \partial_\theta\Big(\big[F - h\sin(\theta) - \gamma I[\rho_h](\theta)\big]\,\eta
    - \gamma\rho_h(\theta)\, I[\eta](\bx,\theta)\Big)
    + \Gamma\,\partial_\theta^2\eta,
\end{equation*}
where $I[\eta](\bx,\theta) = \int_{\R^2}\int_0^{2\pi}\cKr(\bx - \bx')\sin(\theta - \theta')\eta(\bx',\theta')\dd\theta'\dd\bx'$. Since $\rho_h$ is spatially homogeneous, $I[\rho_h](\bx,\theta) = \kappa_0 I_0[\rho_h](\theta)$ with $\kappa_0 := \int_{\R^2}\cKr(\by)\dd\by$.

Decomposing in spatial Fourier modes, convolution becomes multiplication and the dynamics decouples across wavenumbers: $\partial_t\hat\eta(\bk,\cdot,t) = L_h(\bk)\hat\eta(\bk,\cdot,t)$, where
\begin{equation*}\label{eq:Lh-k}
    L_h(\bk)\phi = -iv_0(k_1\cos(\theta) + k_2\sin(\theta))\,\phi
    - \partial_\theta\Big(\big[F - h\sin(\theta) - \gamma\kappa_0 I_0[\rho_h](\theta)\big]\,\phi
    - \gamma\rho_h(\theta)\, I_\bk[\phi](\theta)\Big)
    + \Gamma\,\partial_\theta^2\phi,
\end{equation*}
with $I_\bk[\phi](\theta) := \widehat\cKr(\bk)\int_0^{2\pi}\sin(\theta-\theta')\phi(\theta')\dd\theta'$. At $\bk = 0$, the transport term vanishes and $\widehat\cKr(\mathbf{0}) = \kappa_0$, so $L_h(\mathbf{0})$ reduces to the homogeneous operator of Section~\ref{sec:h-neq-0}, and the results of proposition~\ref{thm:gamma-c} carry over directly.

For $\bk \neq 0$, two structural observations suggest that the growth rate $\mathrm{Re}\,\lambda(\bk,h)$ is strictly less than $\mathrm{Re}\,\lambda(\mathbf{0},h)$. First, the transport term is multiplication by a purely imaginary function and is therefore skew-adjoint on $L^2(0,2\pi)$, contributing nothing to growth rates at leading order. Second, since $\cKr \in L^1(\R^2)$ is non-negative and even, $|\widehat\cKr(\bk)| \leq \widehat\cKr(\mathbf{0}) = \kappa_0$ for all $\bk$, so the effective alignment interaction is strictly weaker at $\bk \neq 0$ than at $\bk = 0$. Both conditions are satisfied by the kernels considered in this paper; in particular, for $\cKr(\bx) = \mathds{1}_{|\bx|\leq R}$ one has $\widehat\cKr(\bk) = \pi R^2 S_R(|\bk|) < \kappa_0$ for all $\bk \neq 0$.

Together these observations indicate that $\bk = 0$ remains the most unstable mode and that $\gamma_c(h)$ determines the onset of instability for the full spatially inhomogeneous problem, though a rigorous proof would require resolvent estimates for $L_h(\bk)$ and we leave this as an open problem.

\section{Conclusion}\label{sec:conclusion}
In this paper we have analysed the onset of synchronisation in a spatially extended Kuramoto-type model of interacting particles subject to an external field of strength $h$, a tilt $F$, and noise of strength $\Gamma$. Our main contributions are as follows.

First, for $h=0$ we carried out a full PDE-level linear stability analysis of the disordered state across four natural normalisations of the spatial interaction kernel. In each case the dominant unstable mode is the spatially homogeneous Fourier mode $\bk = 0$, providing a rigorous justification for the spatially homogeneous reduction employed in earlier work, and confirming that the tilt $F$ plays no role at threshold when $h=0$.

Second, for small $h>0$ we constructed the bifurcating branch of stationary solutions perturbatively and applied eigenvalue perturbation theory to track the critical coupling. The first-order correction vanishes, and the second-order calculation gives the explicit expansion
\begin{equation}
    \gamma_c(h) = 2\Gamma + \frac{h^2 \Gamma(3F^2 + 8\Gamma^2)}{F^2(16\Gamma^2 + F^2)} + O(h^4),
\end{equation}
demonstrating that the external field increases the synchronization threshold quadratically in $h$. The coefficient depends explicitly on both $F$ and $\Gamma$, and the formula is in good agreement with direct Fourier-Galerkin numerical computations.

Third, returning to the full spatially dependent problem for $h \neq 0$, we identified an obstruction to a finite-dimensional Kato reduction for $\bk \neq 0$ modes, arising from the coupling of neighbouring angular modes by the self-propulsion term. A heuristic argument nevertheless indicates that $\bk = 0$ remains the most unstable mode, so that the threshold $\gamma_c(h)$ derived in the homogeneous setting continues to govern the onset of instability in the full problem. Making this argument rigorous remains an open problem. More broadly, the rigorous and systematic study of non-equilibrium phase transitions for models in active matter remains a largely unexplored area.

\paragraph{Acknowledgments} BB is funded by a studentship from the Imperial College London EPSRC DTP in Mathematical Sciences Grant No. EP/W523872/1. GAP is partially supported by an ERC-EPSRC Frontier Research
Guarantee through grant no. EP/X038645, ERC through Advanced grant no. 247031, and a Leverhulme Trust Senior Research Fellowship, SRF{$\setminus$}R1{$\setminus$}241055. BB is grateful to Eloise Lardet for useful discussions.

\printbibliography

@article{chepizhko2021revisiting,
  title={Revisiting the emergence of order in active matter},
  author={Chepizhko, Oleksandr and Saintillan, David and Peruani, Fernando},
  journal={Soft Matter},
  volume={17},
  number={11},
  pages={3113--3120},
  year={2021},
  publisher={Royal Society of Chemistry}
}

@article{ramaswamy2010mechanics,
  title={The mechanics and statistics of active matter},
  author={Ramaswamy, Sriram},
  journal={Annual Review of Condensed Matter Physics},
  volume={1},
  number={1},
  pages={323--345},
  year={2010},
  publisher={Annual Reviews}
}

@book {pavliotis2014book,
    AUTHOR = {Pavliotis, Grigorios A.},
     TITLE = {Stochastic processes and applications},
    SERIES = {Texts in Applied Mathematics},
    VOLUME = {60},
      NOTE = {Diffusion processes, the Fokker-Planck and Langevin equations},
 PUBLISHER = {Springer, New York},
      YEAR = {2014},
     PAGES = {xiv+339},
      ISBN = {978-1-4939-1322-0; 978-1-4939-1323-7},
   MRCLASS = {60-01 (35Q84 35R60 60H10 60H35 60J60 82C31)},
  MRNUMBER = {3288096},
MRREVIEWER = {Isamu\ D\^{o}ku},
       DOI = {10.1007/978-1-4939-1323-7},
       URL = {https://doi.org/10.1007/978-1-4939-1323-7},
}

@article{Reimann_al2008,
Author = {Evstigneev, M. and Zvyagolskaya, O. and Bleil, S. and
   Eichhorn, R. and Bechinger, C. and Reimann, P.},
Title = {{Diffusion of colloidal particles in a tilted periodic potential: Theory
   versus experiment}},
Journal = {Physical Review E},
Year = {2008},
Volume = {{77}},
Number = {{4, Part 1}},
Month = {04},
DOI = {{10.1103/PhysRevE.77.041107}},
Article-Number = {{041107}},
ISSN = {{1539-3755}},
Unique-ID = {{ISI:000255456900010}},
}

@article {MartzelAslangul2001,
    AUTHOR = {Martzel, N. and Aslangul, C.},
     TITLE = {Mean-field treatment of the many-body {F}okker-{P}lanck
              equation},
   JOURNAL = {Journal of Physics. A. Mathematical and General},
  FJOURNAL = {Journal of Physics. A. Mathematical and General},
    VOLUME = {34},
      YEAR = {2001},
    NUMBER = {50},
     PAGES = {11225--11240},
      ISSN = {0305-4470},
   MRCLASS = {82C31 (82C35)},
  MRNUMBER = {1874137},
MRREVIEWER = {Vitor R. Vieira},
       DOI = {10.1088/0305-4470/34/50/305},
       URL = {http://dx.doi.org/10.1088/0305-4470/34/50/305},
}

@article{chaintron2022propagationII,
  title={Propagation of chaos: {A} review of models, methods and applications. {II}. {A}pplications},
  author={Chaintron, Louis-Pierre and Diez, Antoine},
  journal={Kinetic and Related Models},
  volume={15},
  number={6},
  pages={1017--1173},
  year={2022}
}

@article {Barre_2020,
    AUTHOR = {Barr\'{e}, J. and Bernardin, C. and Ch\'{e}trite, R. and
              Chopra, Y. and Mariani, M.},
     TITLE = {From fluctuating kinetics to fluctuating hydrodynamics: a
              {$\Gamma$}-convergence of large deviations functionals
              approach},
   JOURNAL = {Journal of Statistical Physics},
  FJOURNAL = {Journal of Statistical Physics},
    VOLUME = {180},
      YEAR = {2020},
    NUMBER = {1-6},
     PAGES = {1095--1127},
      ISSN = {0022-4715,1572-9613},
   MRCLASS = {82C22 (35Q82 76R50)},
  MRNUMBER = {4131027},
       DOI = {10.1007/s10955-020-02598-w},
       URL = {https://doi.org/10.1007/s10955-020-02598-w},
}

@misc{leimkuhler2025clusterformationdiffusivesystems,
      title={Cluster Formation in Diffusive Systems}, 
      author={Benedict Leimkuhler and René Lohmann and Grigorios A. Pavliotis and Peter A. Whalley},
      year={2025},
      eprint={2510.25034},
      archivePrefix={arXiv},
      primaryClass={math.NA},
      url={https://arxiv.org/abs/2510.25034}, 
}

@misc{gerber2025formationclusterscoarseningweakly,
      title={Formation of clusters and coarsening in weakly interacting diffusions}, 
      author={Nicolai Gerber and Rishabh Gvalani and Martin Hairer and Greg Pavliotis and André Schlichting},
      year={2025},
      eprint={2510.17629},
      archivePrefix={arXiv},
      primaryClass={math.AP},
      url={https://arxiv.org/abs/2510.17629}, 
}

@article{chaintron2022propagationI,
  title={Propagation of chaos: {A} review of models, methods and applications. {I}. {M}odels and methods},
  author={Chaintron, Louis-Pierre and Diez, Antoine},
  journal={Kinetic and Related Models},
  volume={15},
  number={6},
  pages={895--1015},
  year={2022}
}

@article {reimann_al02,
    AUTHOR = {P. Reimann and C. Van den Broeck and H. Linke and P. H\"{a}nggi and J.M. Rubi and A.
    Perez-Madrid},
     TITLE = {Diffusion in tilted periodic potentials: enhancement, universality and scaling},
   JOURNAL = {Physical Review E},
    VOLUME = {65},
    NUMBER = {3},
      YEAR = {2002},
     PAGES = {031104},
}

@book {Straton63,
    AUTHOR = {R. L. Stratonovich},
     TITLE = {Topics in the Theory of Random Noise. {V}ol. {I}},
    SERIES = {Revised English edition. Translated from the Russian by
              Richard A. Silverman},
 PUBLISHER = {Gordon and Breach Science Publishers},
   ADDRESS = {New York},
      YEAR = {1963},
     PAGES = {xiii+292},
   MRCLASS = {60.90 (94.10)},
  MRNUMBER = {MR0158437 (28 \#1660)},
MRREVIEWER = {A. T. Bharucha-Reid},
}

@article{bertoli2024stability,
  title={Stability of stationary states for mean field models with multichromatic interaction potentials},
  author={Bertoli, Benedetta and Goddard, Benjamin D and Pavliotis, Grigorios A},
  journal={IMA Journal of Applied Mathematics},
  volume={89},
  number={5},
  pages={833--859},
  year={2024},
  publisher={Oxford University Press}
}

@article{vicsek2012collective,
  title={Collective motion},
  author={Vicsek, Tam{\'a}s and Zafeiris, Anna},
  journal={Physics Reports},
  volume={517},
  number={3-4},
  pages={71--140},
  year={2012},
  publisher={Elsevier}
}

@article{marchetti2013hydrodynamics,
  title={Hydrodynamics of soft active matter},
  author={Marchetti, M Cristina and Joanny, Jean-François and Ramaswamy, Sriram and Liverpool, Tanniemola B and Prost, Jacques and Rao, Madan and Simha, R Aditi},
  journal={Reviews of Modern Physics},
  volume={85},
  number={3},
  pages={1143--1189},
  year={2013},
  publisher={APS}
}

@article{peruani2008mean,
  title={A mean-field theory for self-propelled particles interacting by velocity alignment mechanisms},
  author={Peruani, Fernando and Deutsch, Andreas and B{\"a}r, Markus},
  journal={The European Physical Journal Special Topics},
  volume={157},
  number={1},
  pages={111--122},
  year={2008},
  publisher={Springer}
}

@article{buendia2021hybrid,
  title={Hybrid-type synchronization transitions: {W}here incipient oscillations, scale-free avalanches, and bistability live together},
  author={Buend{\'\i}a, Victor and Villegas, Pablo and Burioni, Raffaella and Mu{\~n}oz, Miguel A},
  journal={Physical Review Research},
  volume={3},
  number={2},
  pages={023224},
  year={2021},
  publisher={APS}
}

@article{cheng2015long,
  title={The long time behavior of {B}rownian motion in tilted periodic potentials},
  author={Cheng, Liang and Yip, Nung Kwan},
  journal={Physica D: Nonlinear Phenomena},
  volume={297},
  pages={1--32},
  year={2015},
  publisher={Elsevier}
}

@article{peruani2012collective,
  title={Collective motion and nonequilibrium cluster formation in colonies of gliding bacteria},
  author={Peruani, Fernando and Starru{\ss}, J{\"o}rn and Jakovljevic, Vladimir and S{\o}gaard-Andersen, Lotte and Deutsch, Andreas and B{\"a}r, Markus},
  journal={Physical Review letters},
  volume={108},
  number={9},
  pages={098102},
  year={2012},
  publisher={APS}
}

@article{bechinger2016active,
  title={Active particles in complex and crowded environments},
  author={Bechinger, Clemens and Di Leonardo, Roberto and L{\"o}wen, Hartmut and Reichhardt, Charles and Volpe, Giorgio and Volpe, Giovanni},
  journal={Reviews of Modern Physics},
  volume={88},
  number={4},
  pages={045006},
  year={2016},
  publisher={APS}
}

@article{degond2008continuum,
  title={Continuum limit of self-driven particles with orientation interaction},
  author={Degond, Pierre and Motsch, S{\'e}bastien},
  journal={Mathematical Models and Methods in Applied Sciences},
  volume={18},
  number={supp01},
  pages={1193--1215},
  year={2008},
  publisher={World Scientific}
}

@article{chepizhko2010relation,
  title={On the relation between {V}icsek and {K}uramoto models of spontaneous synchronization},
  author={Chepizhko, AA and Kulinskii, VL},
  journal={Physica A: Statistical Mechanics and its Applications},
  volume={389},
  number={23},
  pages={5347--5352},
  year={2010},
  publisher={Elsevier}
}

@article{degond2014hydrodynamics,
  title={Hydrodynamics of the {Kuramoto--Vicsek} model of rotating self-propelled particles},
  author={Degond, Pierre and Dimarco, Giacomo and Mac, Thi Bich Ngoc},
  journal={Mathematical Models and Methods in Applied Sciences},
  volume={24},
  number={02},
  pages={277--325},
  year={2014},
  publisher={World Scientific}
}

@article{degond2015phase,
  title={Phase transitions, hysteresis, and hyperbolicity for self-organized alignment dynamics},
  author={Degond, Pierre and Frouvelle, Amic and Liu, Jian-Guo},
  journal={Archive for Rational Mechanics and Analysis},
  volume={216},
  number={1},
  pages={63--115},
  year={2015},
  publisher={Springer}
}

@article{carrillo2020long,
  title={Long-time behaviour and phase transitions for the {McKean--Vlasov} equation on the torus},
  author={Carrillo, Jose Antonio and Gvalani, Rishabh S and Pavliotis, Grigorios A and Schlichting, Andre},
  journal={Archive for Rational Mechanics and Analysis},
  volume={235},
  number={1},
  pages={635--690},
  year={2020},
  publisher={Springer}
}

@article {delgadino2020diffusive,
    AUTHOR = {Delgadino, M. G. and Gvalani, R. S. and Pavliotis,
              G. A.},
     TITLE = {On the {D}iffusive-{M}ean {F}ield {L}imit for {W}eakly
              {I}nteracting {D}iffusions {E}xhibiting {P}hase {T}ransitions},
   JOURNAL = {Arch. Ration. Mech. Anal.},
  FJOURNAL = {Archive for Rational Mechanics and Analysis},
    VOLUME = {241},
      YEAR = {2021},
    NUMBER = {1},
     PAGES = {91--148},
      ISSN = {0003-9527},
   MRCLASS = {82C22 (60J60 82C31)},
  MRNUMBER = {4271956},
       DOI = {10.1007/s00205-021-01648-1},
       URL = {https://doi.org/10.1007/s00205-021-01648-1},
}

@book {kato1966perturbation,
    AUTHOR = {T. Kato},
     TITLE = {Perturbation theory for linear operators},
    SERIES = {Classics in Mathematics},
      NOTE = {Reprint of the 1980 edition},
 PUBLISHER = {Springer-Verlag},
   ADDRESS = {Berlin},
      YEAR = {1995},
     PAGES = {xxii+619},
      ISBN = {3-540-58661-X},
   MRCLASS = {47A55 (46-00 47-00)},
  MRNUMBER = {MR1335452 (96a:47025)},
}

@article{shinomoto1986phase,
  title={Phase transitions in active rotator systems},
  author={Shinomoto, Shigeru and Kuramoto, Yoshiki},
  journal={Progress of Theoretical Physics},
  volume={75},
  number={5},
  pages={1105--1110},
  year={1986},
  publisher={Oxford University Press}
}

@article{odor2023synchronization,
  title={Synchronization transitions on connectome graphs with external force},
  author={{\'O}dor, G{\'e}za and Papp, Istv{\'a}n and Deng, Shengfeng and Kelling, Jeffrey},
  journal={Frontiers in Physics},
  volume={11},
  pages={1150246},
  year={2023},
  publisher={Frontiers Media SA}
}

@article{vicsek1995novel,
  title={Novel type of phase transition in a system of self-driven particles},
  author={Vicsek, Tam{\'a}s and Czir{\'o}k, Andr{\'a}s and Ben-Jacob, Eshel and Cohen, Inon and Shochet, Ofer},
  journal={Physical Review Letters},
  volume={75},
  number={6},
  pages={1226},
  year={1995},
  publisher={APS}
}

@article{bertoli2025phase,
  title={Phase transitions for interacting particle systems on random graphs},
  author={Bertoli, Benedetta and Pavliotis, Grigorios A and Zagli, Niccol{\`o}},
  journal={arXiv preprint arXiv:2504.02721},
  year={2025}
}

@article{bertini2010dynamical,
  title={Dynamical aspects of mean field plane rotators and the {K}uramoto model},
  author={Bertini, Lorenzo and Giacomin, Giambattista and Pakdaman, Khashayar},
  journal={Journal of Statistical Physics},
  volume={138},
  number={1},
  pages={270--290},
  year={2010},
  publisher={Springer}
}

@article{ohta2008critical,
  title={Critical phenomena in globally coupled excitable elements},
  author={Ohta, Hiroki and Sasa, Shin-Ichi},
  journal={Physical Review E—Statistical, Nonlinear, and Soft Matter Physics},
  volume={78},
  number={6},
  pages={065101},
  year={2008},
  publisher={APS}
}

@article{chate2008collective,
  title={Collective motion of self-propelled particles interacting without cohesion},
  author={Chat{\'e}, Hugues and Ginelli, Francesco and Gr{\'e}goire, Guillaume and Raynaud, Franck},
  journal={Physical Review E—Statistical, Nonlinear, and Soft Matter Physics},
  volume={77},
  number={4},
  pages={046113},
  year={2008},
  publisher={APS}
}

@article{dawson1983critical,
  title={Critical dynamics and fluctuations for a mean-field model of cooperative behavior},
  author={Dawson, Donald A},
  journal={Journal of Statistical Physics},
  volume={31},
  number={1},
  pages={29--85},
  year={1983},
  publisher={Springer}
}

@article{gregoire2004onset,
  title={Onset of collective and cohesive motion},
  author={Gr{\'e}goire, Guillaume and Chat{\'e}, Hugues},
  journal={Physical Review Letters},
  volume={92},
  number={2},
  pages={025702},
  year={2004},
  publisher={APS}
}

\end{document}